\numberwithin{equation}{section}
\definecolor{ClaColor}{RGB}{255,0,0}
\newcommand{\clacomment}[1]{\textbf{\textcolor{ClaColor}{#1}}}
\definecolor{FeColor}{RGB}{0,0,255}
\newcommand{\SetFigFont}[3]{}
\title[The Fermionic Signature Operator in De Sitter Spacetime]{The Fermionic Signature Operator in \\
De Sitter Spacetime}
\author[C.\ Dappiaggi]{Claudio Dappiaggi}
\author[F.\ Finster]{Felix Finster}
\author[S.\ Murro]{Simone Murro}
\author[E.\ Radici]{Emanuela Radici \\ \\ February 2019}
\address{C. Dappiaggi, Dipartimento di Fisica \\ Universit{\`a} degli Studi di Pavia \& INFN, Sezione di Pavia, Via Bassi, 6 \\ I-27100 Pavia \\ Italy}
\email{claudio.dappiaggi@unipv.it}
\address{F. Finster, Fakult\"at f\"ur Mathematik \\ Universit\"at Regensburg \\ D-93040 Regensburg \\ Germany}
\email{finster@ur.de}
\address{S. Murro, Mathematisches Institut, Universit\"at Freiburg, 79104 Freiburg, Germany}
\email{simone.murro@math.uni-freiburg.de}
\address{E. Radici, DISIM - Department of Information Engineering, Computer
	Science and Mathematics, University of L'Aquila, Via Vetoio 1 (Coppito) 67100 L’Aquila (AQ) -
	Italy
}
\email{emanuela.radici@univaq.it}
\newtheorem{Def}{Definition}[section]
\newtheorem{Thm}[Def]{Theorem}
\newtheorem{Prp}[Def]{Proposition}
\newtheorem{Lemma}[Def]{Lemma}
\newtheorem{Remark}[Def]{Remark}
\newtheorem{Corollary}[Def]{Corollary}
\newcommand{\Thanks}{\vspace*{.5em} \noindent \thanks}
\newcommand{\beq}{\begin{equation}}
\newcommand{\eeq}{\end{equation}}
\newcommand{\Proof}{\begin{proof}}
\newcommand{\QED}{\end{proof} \noindent}
\newcommand{\QEDrem}{\ \hfill $\Diamond$}
\newcommand{\la}{\langle}
\newcommand{\ra}{\rangle}
\newcommand{\bra}{\mathopen{<}}
\newcommand{\ket}{\mathclose{>}}
\newcommand{\Sl}{\mathopen{\prec}}
\newcommand{\Sr}{\mathclose{\succ}}
\newcommand{\C}{\mathbb{C}}
\newcommand{\R}{\mathbb{R}}
\newcommand{\1}{\mbox{\rm 1 \hspace{-1.05 em} 1}}
\newcommand{\nuslsh}{\slashed{\nu}}
\renewcommand{\H}{\mathscr{H}}
\newcommand{\uslsh}{\slashed{u}}
\newcommand{\bep}{\begin{pmatrix}}
\newcommand{\enp}{\end{pmatrix}}
\newcommand{\Dir}{{\mathcal{D}}}
\newcommand{\D}{{\mathscr{D}}}
\renewcommand{\O}{{\mathscr{O}}}
\newcommand{\Lin}{\text{\rm{L}}}
\newcommand{\Cisc}{C^\infty_{\text{\rm{sc}}}}
\newcommand{\Cisco}{C^\infty_{\text{\rm{sc}},0}}
\DeclareMathOperator{\diag}{diag}
\newcommand{\p}{\mathfrak{p}}
\newcommand{\Sig}{\mathscr{S}}
\newcommand{\scrM}{\mycal M}
\newcommand{\scrN}{\mycal N}
\DeclareFontFamily{OT1}{rsfso}{}
\DeclareFontShape{OT1}{rsfso}{m}{n}{ <-7> rsfso5 <7-10> rsfso7 <10-> rsfso10}{}
\DeclareMathAlphabet{\mycal}{OT1}{rsfso}{m}{n}
\begin{document}

\begin{abstract}
The fermionic projector state is a distinguished quasi-free state for the algebra of Dirac fields in a globally hyperbolic spacetime. We construct and analyze it in the four-dimensional de Sitter spacetime, both in the closed and in the flat slicing. In the latter case we show that the mass oscillation properties do not hold due to boundary effects. This is taken into account in a so-called mass decomposition. The involved fermionic signature operator defines a fermionic projector state. In the case of a closed slicing,  we construct the fermionic signature operator and show that the ensuing state is maximally symmetric and of Hadamard form, thus coinciding with the counterpart for spinors of the Bunch-Davies state.
\end{abstract}

\maketitle
\tableofcontents

\section{Introduction}

Quantum fields in globally hyperbolic spacetimes and especially the algebraic formulation play a pivotal role in understanding and in formalizing different models, ranging from black hole physics to cosmology. Within the framework of algebraic quantum field theory, their analysis is based on a quantization scheme which can be summarized as a twofold approach, see {\it e.g.} \cite{Benini:2013fia,Brunetti:2015vmh}. On the one hand, one associates to a physical system a unital $*$-algebra $\mathcal{A}$, whose elements are interpreted as observables and which encodes structural properties such as causality and the canonical commutation or anti-commutation relations. On the other hand, one must select a state, that is a positive and normalized linear functional $\omega:\mathcal{A}\to\mathbb{C}$. Out of the pair $(\mathcal{A},\omega)$ one recovers the standard probabilistic interpretation of quantum theories via the GNS theorem. It associates to such pair a triple $(\mathcal{D}_\omega,\pi_\omega,\Omega_\omega)$, where $\mathcal{D}_\omega$ is a dense subspace of Hilbert space $\mathcal{H}_\omega$, $\pi_\omega:\mathcal{A}\to\mathcal{L}(\mathcal{D}_\omega)$ is a $*$-homomorphism, while $\Omega_\omega\in\mathcal{D}_\omega$ is a unit norm cyclic vector such that $\mathcal{H}_\omega=\overline{\pi_\omega[\mathcal{A}]}$.

Among the plethora of all possible states, one wants to select those which are physically sensible. Indeed, these are characterized mathematically by the so-called Hadamard condition, a constraint on the wavefront set of the underlying two point distribution, see \cite{Gerard:2019zzm,Khavkine:2014mta} for recent surveys on this topic.

While the existence of Hadamard states for free field theories on a generic globally hyperbolic spacetime $(\scrM,g)$ with $\dim\scrM>2$ has been established since long, the construction of explicit examples has been a subject of several investigations in the past years. Yet the attention has been focused mainly on bosonic scalar free field theories. On the contrary, if one considers spinor fields, which will be at the heart of this paper, only few  techniques have been thoroughly analysed, ranging from a positive frequency splitting on static backgrounds, to pseudodifferential calculus \cite{Gerard:2019zzm}, to more ad hoc methods {\it e.g.} \cite{Dappiaggi:2010gt,Hack:2015zwa,Hollands:1999fc} in which cosmological spacetimes were considered.

Another successful construction is known as the fermionic projector (FP) state. This is closely linked to the fermionic signature operator, which is a symmetric operator acting on the space of solutions of the massive Dirac equation on a globally hyperbolic spacetime. It was introduced in \cite{finite,infinite} and it has the advantage of producing a  distinguished quasi-free, pure state for the C$^*$-algebra of Dirac quantum fields, provided that a suitable condition, known as the strong mass oscillation property, holds true \cite{hadamard}. For a related analysis, containing a weaker
but non-canonical condition, refer to \cite{Drago:2016kue}. The advantage of focusing on the FP state is that it does not rely on the existence of any specific Killing isometry and thus it can be applied in a large class of scenarios. The price to pay for such generality is the impossibility to conclude a priori that the FP state is of Hadamard form. Several analyses of this issue have been made, and it is now clear that, although one cannot expect the Hadamard condition to hold true generically, see \cite{Fewster:2014wca}, it is nonetheless verified in many interesting scenarios, see in particular \cite{Finster:2016apv,hadamard,hadamard2}. We remark that the construction of the fermionic signature operator goes
back to~\cite{sea}, where the FP state was constructed perturbatively
in Minkowski space in the presence of an external potential.
More recently, a similar construction was proposed for scalar fields in~\cite{Afshordi:2012jf,johnston, sorkin2},
but only in space-times of finite lifetime. 
Yet, the ensuing state fails to obey to the Hadamard property as first observed in \cite{Fewster:2012ew}. This seems to be a generic feature which can be cured only by introducing suitable ad-hoc cut-off functions as proposed in \cite{Brum:2013bia}. This is in sharp contrast with the behaviour of the FP state in space-times of infinite lifetime.

In this paper we focus on a distinguished background, namely four-dimensional de Sitter spacetime, which is a maximally symmetric solution of the Einstein's equations with a positive cosmological constant. There exist two distinguished patches, which we consider. The first is the closed one which allows a global construction of the FP state on this background. The second is the flat one, which, despite covering only part of de Sitter spacetime, yields a metric which describes an exponentially expanding Universe with flat spatial sections. For this reason, we refer to this scenario as cosmological de Sitter. In both cases, the background is globally hyperbolic and one can therefore investigate whether the FP state exists and which are its properties. Besides the natural question whether the Hadamard condition holds true, in the closed patch, one can exploit the existence of a maximal number of Killing fields to construct a unique maximally symmetric two-point correlation function compatible with the Hadamard form, see \cite{Allen}. This identifies a unique, distinguished quasi-free state, to which we refer as the spinorial Bunch-Davies state, in analogy to the bosonic counterpart first constructed in \cite{Bunch:1978yq}.

Our investigation unveils that the two scenarios that we consider behave in a drastically different way. In the cosmological de Sitter background the strong mass oscillation property does not hold true and we can derive a so-called mass decomposition. This failure can be ascribed to the occurrence of boundary terms which originate from this spacetime being actually an open subset of the global de Sitter solution of the Einstein's equations. Yet, it turns out that the mass decomposition suffices to build a fermionic projector state, but we can conclude neither that it is of Hadamard form nor that it is the restriction of the global Bunch Davies state to the flat cover of de Sitter spacetime. 

On the contrary, when we focus on the closed slicing, as already observed in \cite{infinite}, the strong mass oscillation property holds true. Therefore we can construct the FP state and, since the procedure is automatically invariant under the action of all background isometries, we obtain a pure, quasi-free state for the algebra of Dirac fields whose two-point function is automatically maximally symmetric and, moreover, we prove that it is of Hadamard form. Hence the FP state coincides with the spinorial Bunch-Davies counterpart, proving once again the robustness of this constructive scheme.

The paper is organized as follows: In Section \ref{Sec:Preliminaries}, we review the basics on the Dirac equation on globally hyperbolic spacetimes. Subsequently we review the strong mass oscillation property, we introduce the fermionic signature operator and we highlight the construction of the fermionic projector (FP) state. To conclude we introduce succinctly the notion of Hadamard states focusing on the work of \cite{Allen} on maximally symmetric backgrounds. In Section \ref{Section:cosmological_dS} we focus our attention on the cosmological de Sitter spacetime and we derive a mass decomposition for the Dirac equation in Theorem \ref{thmdecomp}. Subsequently we construct the associated FP state and we discuss its properties. In Section \ref{Sec:closed_dS} we turn our attention to the closed slicing of de Sitter constructing explicitly the fermionic signature operator and the associated FP state. We conclude by proving that it is maximally symmetric and of Hadamard form.

\section{Preliminaries}\label{Sec:Preliminaries}
\subsection{The Dirac Equation in Globally Hyperbolic Spacetimes}
As in~\cite{finite, infinite}, we let~$(\scrM, g)$ be a smooth, globally hyperbolic Lorentzian spin
manifold, though of fixed dimension $k=4$.
For the signature of the metric we use the convention~$(+ ,-, -, -)$.
We denote the corresponding spinor bundle by~$S\scrM$. Its fibres~$S_x\scrM$ are endowed
with an inner product~$\Sl .|. \Sr_x$ of signature~$(2,2)$, which we refer to as the spin scalar product; for details see~\cite{baum, lawson+michelsohn}).
Clifford multiplication is described by a mapping~$\gamma$
which satisfies the anti-commutation relations,
\[ \gamma \::\: T\scrM \rightarrow \text{End}(S\scrM) \qquad
\text{with} \qquad \gamma(u) \,\gamma(v) + \gamma(v) \,\gamma(u) = 2 \, g(u,v)\,\1_{S(\scrM)} \:\,, \]
where~$\text{End}(S\scrM)$ is the endomorphism bundle.
We again write Clifford multiplication in components with the Dirac matrices~$\gamma^j$
and use the short notation with the Feynman dagger, $\gamma(u) \equiv u^j \gamma_j \equiv \uslsh$.
The metric connections on the tangent bundle and the spinor bundle are denoted by~$\nabla$.
The sections of the spinor bundle are also referred to as wave functions.
We denote with $C^\infty(\scrM, S\scrM)$ the smooth sections of the spinor bundle, while with $C^\infty_0(\scrM, S\scrM)$ those which are in addition compactly supported.
On the wave functions, one has the inner product
\begin{gather}
\bra .|. \ket \::\: C^\infty(\scrM, S\scrM) \times C^\infty_0(\scrM, S\scrM) \rightarrow \C \:, \notag \\
\bra \psi|\phi \ket = \int_\scrM \Sl \psi | \phi \Sr_x \: d\mu_\scrM,\:.  \label{stip} 
\end{gather}
where $d\mu_\scrM$ is the metric induced volume form on $\scrM$, while ~$\Sl .|. \Sr_x$ is the fiberwise inner product on $S_x\scrM$.
This inner product can be applied to more general wave functions, provided that
their pointwise inner product is integrable, i.e.~$\Sl \psi | \phi \Sr_x \in L^1(\scrM, d\mu_\scrM)$.
The Dirac operator~$\Dir$ is defined by
\[ \Dir := i \gamma^j \nabla_j \::\: C^\infty(\scrM, S\scrM) \rightarrow C^\infty(\scrM, S\scrM)\:. \]
For a given real parameter~$m \in \R$ (the ``mass''), the Dirac equation reads
\beq \label{Dirac}
(\Dir - m) \,\psi_m = 0 \:.
\eeq
The Cauchy problem for the Dirac equation is well-posed (see for example~\cite{taylor3, baer+ginoux, dimock3}).
For clarity, we always denote solutions of the Dirac equation by a subscript~$m$ and 
we consider mainly those lying in ~$\Cisc(\scrM, S\scrM)$, the space of smooth sections
with spatially compact support. Thereon, one defines the scalar product
\beq \label{print}
(\psi_m | \phi_m)_m = 2 \pi \int_\scrN \Sl \psi_m | \nuslsh \phi_m \Sr_x\: d\mu_\scrN(x) \:,
\eeq
where~$\scrN$ denotes any Cauchy surface and~$\nu$ its future-directed normal. Due to current conservation, 
the scalar product~$(.|.)_m$ is independent of the choice of~$\scrN$; for details see~\cite[Section~2]{finite}). Upon completion one obtains the Hilbert space~$(\H_m, (.|.)_m)$.

\subsection{The Strong Mass Oscillation Property}
In a spacetime of infinite life time, the inner product~$\bra \psi_m | \phi_m \ket$ between two solutions~$\psi_m, \phi_m \in \H_m$ is in general ill-defined, because the time integral in~\eqref{stip}
diverges. As observed in~\cite{infinite}, this problem can be avoided by
working with families of solutions of the Dirac operator and integrating over the mass parameter
before carrying out the spacetime integral in~\eqref{stip}.
We now summarize the parts of the construction needed for what follows.

We let~$I = (m_L, m_R) \subset \R^+$ be an open interval which does not contain zero.
We let~$\psi = (\psi_m)_{m \in I}$ be a family of solutions of the Dirac equation~\eqref{Dirac}.
We assume that for every~$m \in I$, the solution~$\psi_m$
is of the class~$\Cisc(\scrM, S\scrM)$. Moreover, we assume that the family
depends smoothly on~$m$ and vanishes for~$m$ outside a compact subset of~$I$.
We denote the corresponding class of functions by
\[ 
\psi \in \Cisco(\scrM \times I, S\scrM) \:, \]
where~$\Cisco(\scrM \times I, S\scrM)$ denotes the set of smooth functions which are compact both spatially and in~$I$.
Then for any fixed~$m$, we can take the scalar product~\eqref{print}. 
On families of solutions~$\psi, \phi \in \Cisco(\scrM \times I, S\scrM)$,
we introduce a scalar product by integrating over the mass parameter,
\[ 
( \psi | \phi) := \int_I (\psi_m | \phi_m)_m \: dm  \,,\]
where~$dm$ is the Lebesgue measure. Forming the completion gives the
Hilbert space $(\H, (.|.))$, whose norm is denoted by~$\| . \|$.

For the applications, it is useful to introduce
a subspace of solutions with useful properties:
\begin{Def} \label{defHinf}
We call~$\H^\infty \subset \Cisco(\scrM \times I, S\scrM) \cap \H$ a subspace
of the smooth solutions with the following properties:
\begin{itemize}[leftmargin=2.5em]
\item[\rm{(i)}] $\H^\infty$ is invariant under multiplication by smooth functions in the mass parameter,
\[ \eta(m)\, \psi(x,m) \in \H^\infty \qquad \forall\: \psi \in \H^\infty,\;\eta \in C^\infty(I) \:. \]
\item[{\rm{(ii)}}] The set~$\H^\infty_m := \{ \psi(.,m) \,|\, \psi \in \H^\infty\}$ is a dense subspace of~$\H_m$, i.e.
\[ \overline{\H^\infty_m}^{(.|.)_m} = \H_m \qquad \forall \:m \in I \:. \]
\end{itemize}
We refer to~$\H^\infty$ as the {\bf{domain}} for the mass oscillation property.
\end{Def} \noindent
In what follows, we always choose the maximal domain,
\[ \H^\infty = \Cisco(\scrM \times I, S\scrM) \cap \H \:. \]
We call ~$T$ the operator of multiplication by the mass parameter,
\[ T \::\: \H \rightarrow \H \:,\qquad (T \psi)_m = m \,\psi_m \:. \]
It is bounded, symmetric and leaves~$\H^\infty$ invariant,
\[ T^* = T \in \Lin(\H) \qquad \text{and} \qquad
T|_{\H^\infty} \::\: \H^\infty \rightarrow \H^\infty \:. \]
Moreover, integrating over~$m$ gives the operation
\[ \p \::\: \H^\infty \rightarrow \Cisc(\scrM, S\scrM)\:,\qquad
\p \psi = \int_I \psi_m\: dm \:. \]

\begin{Def} \label{defsMOP}
The Dirac operator~$\Dir$ on the globally hyperbolic manifold~$(\scrM, g)$ has the {\bf{strong mass oscillation
property}} in the interval~$I$ with domain~$\H^\infty$ (see Definition~\ref{defHinf}), if there exists a constant ~$c>0$ such that
\[ 
|\bra \p \psi | \p \phi \ket| \leq c \int_I \, \|\phi_m\|_m\, \|\psi_m\|_m\: dm
\qquad \forall\: \psi, \phi \in \H^\infty\:. \]
\end{Def} \noindent
For clarity, we point out that writing the inner product~$\bra \p \psi | \p \phi \ket$
implicitly involves the condition that this inner product must be well-defined and finite.
More precisely, one must verify that the function~$\Sl \p \psi | \p \phi \Sr_x \in L^1(\scrM, d\mu_\scrM)$
(for more details see~\cite[Sections~3 and~4]{infinite}).

\subsection{The Fermionic Signature Operator}
The construction of the fermionic signature operator is based on the
following results \cite[Theorem~4.2 and Proposition~4.4]{infinite}:

\begin{Thm} \label{thmsMOP}
The following statements are equivalent:
\begin{itemize}[leftmargin=2.5em]
\item[\rm{(i)}] The strong mass oscillation property holds.
\item[\rm{(ii)}] There exists a constant~$c>0$ such that for all~$\psi, \phi \in \H^\infty$,
the following two relations hold:
\begin{align}
|\bra \p \psi | \p \phi \ket| &\leq c\, \|\psi\|\, \|\phi\| \label{mb1} \\
\bra \p T \psi | \p \phi \ket &= \bra \p \psi | \p T \phi \ket\:. \label{mb2}
\end{align}
\item[\rm{(iii)}] There exists a family of linear operators~$\Sig_m \in \Lin(\H_m)$ which are uniformly bounded,
\[ \sup_{m \in I} \|\Sig_m\| < \infty\:, \]
such that
\beq \label{Smdef}
\bra \p \psi | \p \phi \ket = \int_I (\psi_m \,|\, \Sig_m \,\phi_m)_m\: dm \qquad
\forall\: \psi, \phi \in \H^\infty\:.
\eeq
\end{itemize}
\end{Thm}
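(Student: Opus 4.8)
The plan is to establish the cyclic chain $\mathrm{(i)} \Rightarrow \mathrm{(ii)} \Rightarrow \mathrm{(iii)} \Rightarrow \mathrm{(i)}$, in which the first and last steps are essentially computational and the middle one carries the analytic weight. Throughout I identify~$\H$ with the direct integral~$\int_I^\oplus \H_m\, dm$, under which~$T$ is multiplication by the variable~$m$ and~$(\psi|\phi) = \int_I (\psi_m|\phi_m)_m\, dm$, so that in particular~$\|\psi\|^2 = \int_I \|\psi_m\|_m^2\, dm$.

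For~$\mathrm{(i)} \Rightarrow \mathrm{(ii)}$, the bound~\eqref{mb1} follows at once from the strong mass oscillation estimate together with the Cauchy--Schwarz inequality in the mass variable,
\[ \int_I \|\psi_m\|_m\, \|\phi_m\|_m\, dm \leq \Big( \int_I \|\psi_m\|_m^2\, dm \Big)^{\frac12} \Big( \int_I \|\phi_m\|_m^2\, dm \Big)^{\frac12} = \|\psi\|\, \|\phi\| \:. \]
For the symmetry~\eqref{mb2} I would use that each~$\psi_m$ solves~\eqref{Dirac}, so that~$(T\psi)_m = m\,\psi_m = \Dir\,\psi_m$ and hence~$\p T \psi = \int_I \Dir\,\psi_m\, dm = \Dir\, \p\psi$, the Dirac operator being pulled out of the finite, smooth mass integral. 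Thus~\eqref{mb2} reduces to~$\bra \Dir\, \p\psi \,|\, \p\phi \ket = \bra \p\psi \,|\, \Dir\, \p\phi \ket$, i.e.\ to the formal self-adjointness of~$\Dir$ with respect to~$\bra .|. \ket$. Since~$\p\psi, \p\phi$ have spatially compact support the spatial boundary terms vanish; the convergence of the spacetime integral and the vanishing of the boundary contributions at temporal infinity are exactly what~$\mathrm{(i)}$ provides.

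For~$\mathrm{(ii)} \Rightarrow \mathrm{(iii)}$, the bound~\eqref{mb1} shows that~$(\psi,\phi) \mapsto \bra \p\psi | \p\phi \ket$ extends to a bounded sesquilinear form on~$\H$, so the Riesz representation theorem yields a unique~$\Sig \in \Lin(\H)$ with~$\|\Sig\| \leq c$ and~$\bra \p\psi | \p\phi \ket = (\psi \,|\, \Sig\, \phi)$. Substituting~\eqref{mb2} and using the symmetry of~$T$ gives~$(\psi \,|\, T\Sig\,\phi) = (\psi \,|\, \Sig T\, \phi)$ for all~$\psi,\phi \in \H^\infty$, hence~$\Sig$ commutes with~$T$; invoking the invariance of~$\H^\infty$ under multiplication by smooth functions of~$m$ (property~(i) of Definition~\ref{defHinf}) upgrades this to commutation with all multiplications~$M_\eta$, $\eta \in C^\infty(I)$, so that~$\Sig$ is decomposable. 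The decomposition theorem for the direct integral then produces a measurable family~$\Sig_m \in \Lin(\H_m)$ with~$\Sig = \int_I^\oplus \Sig_m\, dm$ and~$\|\Sig_m\| \leq \|\Sig\| \leq c$ for almost every~$m$, which is precisely~\eqref{Smdef}. I expect this step to be the main obstacle: one must set up the direct-integral structure, check measurability of~$m \mapsto \Sig_m$, and promote the almost-everywhere bound to a genuine uniform bound over all of~$I$, the latter relying on the smooth~$m$-dependence of the construction.

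Finally,~$\mathrm{(iii)} \Rightarrow \mathrm{(i)}$ is a one-line estimate: from~\eqref{Smdef},
\[ |\bra \p\psi | \p\phi \ket| \leq \int_I \|\psi_m\|_m\, \|\Sig_m\, \phi_m\|_m\, dm \leq \Big(\sup_{m\in I} \|\Sig_m\|\Big) \int_I \|\psi_m\|_m\, \|\phi_m\|_m\, dm \:, \]
which is the strong mass oscillation property with~$c = \sup_{m\in I} \|\Sig_m\|$.
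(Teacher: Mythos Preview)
The paper does not prove Theorem~\ref{thmsMOP}; it quotes it verbatim from~\cite[Theorem~4.2]{infinite} and refers the reader there. Hence there is no in-paper proof to compare against, and your proposal stands on its own.

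As a sketch your cyclic argument is the natural one and matches what one finds in~\cite{infinite}. One point deserves a sharper justification, however. In the step~$\mathrm{(i)} \Rightarrow \mathrm{(ii)}$ you reduce~\eqref{mb2} to the formal symmetry of~$\Dir$ with respect to~$\bra .|. \ket$ and then assert that ``the vanishing of the boundary contributions at temporal infinity are exactly what~$\mathrm{(i)}$ provides.'' That is not quite what~$\mathrm{(i)}$ says: the strong mass oscillation property gives the \emph{bound} on~$|\bra \p\psi | \p\phi \ket|$ and in particular the integrability~$\Sl \p\psi | \p\phi \Sr_x \in L^1(\scrM)$, but it does not by itself furnish the pointwise temporal decay needed to drop the boundary terms in the integration by parts~$\bra \Dir\, \p\psi | \p\phi \ket = \bra \p\psi | \Dir\, \p\phi \ket$. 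Indeed, Section~\ref{Section:cosmological_dS} of the present paper exhibits a situation in which the spacetime integral exists (Proposition~\ref{prpdouble}) yet the analogous identity \emph{fails} precisely because of nonvanishing boundary terms (Proposition~\ref{prpT}). The argument in~\cite{infinite} handles this by an approximation/cutoff procedure rather than a direct integration by parts; you should either invoke that, or make explicit why the~$L^1$ control from~$\mathrm{(i)}$ suffices to kill the boundary contributions.

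Your handling of~$\mathrm{(ii)} \Rightarrow \mathrm{(iii)}$ via Riesz representation, commutation with~$T$ (and with all~$M_\eta$ using Definition~\ref{defHinf}(i)), and the decomposition theorem for direct integrals is exactly the strategy used in~\cite{infinite}; the measurability and almost-everywhere issues you flag are genuine but routine. The implication~$\mathrm{(iii)} \Rightarrow \mathrm{(i)}$ is correct as written.
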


\begin{Prp} {\bf{(uniqueness of~$\Sig_m$)}} \label{prpunique}
The family~$(\Sig_m)_{m \in I}$ from Theorem~\ref{thmsMOP}
can be chosen so that, for all~$\psi, \phi \in \H^\infty$, the expectation
value~$ f_{\psi, \phi}(m) := (\psi_m | \Sig_m \phi_m)_m$ is continuous in~$m$,
\beq \label{flip}
f_{\psi, \phi} \in C^0_0(I) \:.
\eeq
The family~$(\Sig_m)_{m \in I}$ with the properties~\eqref{Smdef} and~\eqref{flip} is unique.
Moreover, choosing two intervals~$\check{I}$ and~$I$ with~$m \in \check{I} \subset I$
and~$0 \not \in \overline{I}$, 
and denoting all the objects constructed in~$\check{I}$ with an additional check,
we have
\[ 
\check\Sig_m = \Sig_m \:. \]
\end{Prp}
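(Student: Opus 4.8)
The plan is to produce an \emph{explicit} continuous representative of the integrand in~\eqref{Smdef} and then to read off both uniqueness and the compatibility relation from it; the point is that Theorem~\ref{thmsMOP}(iii) determines the family $(\Sig_m)$ only up to a null set in~$m$, so the first task is to single out a canonical pointwise value. First I would record a pointwise bound: testing~\eqref{Smdef} with $\eta\psi$ and $\zeta\phi$ in place of $\psi,\phi$ (legitimate by property~(i) of Definition~\ref{defHinf}) and inserting the strong--MOP inequality gives $\bigl|\int_I \overline{\eta}\,\zeta\,f_{\psi,\phi}\,dm\bigr|\le c\int_I |\eta\,\zeta|\,\|\psi_m\|_m\|\phi_m\|_m\,dm$ for all $\eta,\zeta\in C^\infty(I)$, whence $|f_{\psi,\phi}(m)|\le c\,\|\psi_m\|_m\|\phi_m\|_m$ for a.e.\ $m$. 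The same computation with a single factor $\eta$ on the first argument, using conjugate-linearity of $\bra\,.\,|\,.\,\ket$ in the first slot and interchanging the $m$- and spacetime-integrals, yields $\int_I \overline{\eta(m)}\,f_{\psi,\phi}(m)\,dm=\bra \p(\eta\psi)\,|\,\p\phi\ket=\int_I \overline{\eta(m)}\,\bra \psi_m\,|\,\p\phi\ket\,dm$, so the candidate continuous representative is $g_{\psi,\phi}(m):=\bra \psi_m\,|\,\p\phi\ket$, which agrees with $f_{\psi,\phi}$ almost everywhere.

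The substance of the existence statement is to show that $g_{\psi,\phi}\in C^0_0(I)$. Compact support is immediate, since $\psi_m$ vanishes for $m$ outside a compact subset of~$I$. The hard part---and the main obstacle---is the well-definedness and continuity: one must check that $\Sl \psi_m\,|\,\p\phi\Sr_x\in L^1(\scrM,d\mu_\scrM)$ for each fixed $m$ and that the integral depends continuously on~$m$. One may \emph{not} split $\p\phi=\int_I \phi_{m'}\,dm'$ and integrate termwise, since $\bra \psi_m\,|\,\phi_{m'}\ket$ is precisely the ill-defined object that the mass oscillation is designed to avoid; the decay of $\p\phi$ that makes the integral finite appears only after the $m'$-integration. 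I would extract both facts from the $L^1$-estimates underlying the strong mass oscillation property (Theorem~\ref{thmsMOP} and \cite{infinite}), approximating $\psi_m$ by $\p(\eta_n\psi)$ for an approximate identity $\eta_n\to\delta_m$ and passing to the limit by dominated convergence; securing a uniform-in-$m$ integrable majorant is the delicate point. Granting this, continuity of $g_{\psi,\phi}$ upgrades the a.e.\ bound above to one valid for \emph{every} $m$, which shows that $g_{\psi,\phi}(m)$ depends on $\phi$ only through its slice $\phi_m$ (two families with the same slice differ by one with vanishing slice, to which the everywhere-bound applies). Hence $(\psi_m,\phi_m)\mapsto g_{\psi,\phi}(m)$ is a bounded sesquilinear form on $\H^\infty_m$, which extends by the density property~(ii) to $\H_m$ and is represented by $\Sig_m\in\Lin(\H_m)$ with $\|\Sig_m\|\le c$; these operators satisfy~\eqref{Smdef} and make $f_{\psi,\phi}=g_{\psi,\phi}$ continuous, proving~\eqref{flip}.

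Uniqueness is then a short density argument. If $(\Sig_m)$ and $(\Sig'_m)$ both obey~\eqref{Smdef} and~\eqref{flip}, subtracting and testing with $\eta\psi$ gives $\int_I \overline{\eta(m)}\,\bigl(f_{\psi,\phi}(m)-f'_{\psi,\phi}(m)\bigr)\,dm=0$ for all $\eta\in C^\infty(I)$; since both integrands are continuous, $f_{\psi,\phi}\equiv f'_{\psi,\phi}$ on~$I$. Thus $(\psi_m\,|\,(\Sig_m-\Sig'_m)\,\phi_m)_m=0$ for every $m$ and all $\psi,\phi\in\H^\infty$, and density of $\H^\infty_m$ in $\H_m$ forces $\Sig_m=\Sig'_m$ for each $m$. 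It is exactly here that pointwise (rather than a.e.) continuity of $f_{\psi,\phi}$ is indispensable, as it promotes the conclusion from almost every $m$ to every $m$.

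Finally, for the compatibility $\check\Sig_m=\Sig_m$ I would restrict to families $\psi,\phi\in\H^\infty$ whose mass support lies in $\check I$; these form an admissible domain for the interval $\check I$, the density property~(ii) being inherited because any slice in $\H_m$ is reached by multiplying a family by a cutoff supported in $\check I$ and equal to one at~$m$, and the strong mass oscillation inequality on $\check I$ is inherited from that on~$I$. For such families the operator $\p$ and the number $\bra\p\psi\,|\,\p\phi\ket$ are literally unchanged, so~\eqref{Smdef} read on $\check I$ and on $I$ give $\int_{\check I}(\psi_m\,|\,\check\Sig_m\,\phi_m)_m\,dm=\int_{\check I}(\psi_m\,|\,\Sig_m\,\phi_m)_m\,dm$. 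Testing with $\eta\in C^\infty(\check I)$ and invoking continuity and density exactly as in the uniqueness step yields $\check\Sig_m=\Sig_m$ for all $m\in\check I$.
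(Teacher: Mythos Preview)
The paper does not itself prove this proposition; it is imported verbatim from~\cite[Proposition~4.4]{infinite}, so there is no in-paper argument to compare against. I therefore comment on your proposal directly.

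Your uniqueness and compatibility arguments are correct and are exactly the standard ones. The pointwise a.e.\ bound $|f_{\psi,\phi}(m)|\le c\,\|\psi_m\|_m\|\phi_m\|_m$ indeed follows by testing~\eqref{Smdef} with $\eta\psi,\zeta\phi$ and invoking the strong-MOP inequality, and once a continuous representative is in hand your density arguments go through.

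The existence step is where your proposal is incomplete, and the gap you flag is real rather than merely technical. Your candidate representative $g_{\psi,\phi}(m)=\bra\psi_m\,|\,\p\phi\ket$ is not known to be well-defined from the abstract strong mass oscillation property alone: the hypothesis guarantees $\Sl\p\psi\,|\,\p\phi\Sr_x\in L^1(\scrM)$, but it bounds only the \emph{value} of the spacetime integral, not the $L^1$-norm of the integrand, and in particular it does not yield $\Sl\psi_m\,|\,\p\phi\Sr_x\in L^1(\scrM)$ for a fixed~$m$. Consequently the Fubini interchange you invoke to identify $g_{\psi,\phi}$ with $f_{\psi,\phi}$ a.e.\ is unjustified at this level of generality, and the approximate-identity argument cannot supply a dominated-convergence majorant either: the strong-MOP bound controls $|\bra\p(\eta_n\psi)\,|\,\p\phi\ket|$, not $\|\Sl\p(\eta_n\psi)\,|\,\p\phi\Sr\|_{L^1(\scrM)}$, and the latter need not stay bounded as $\eta_n\to\delta_m$. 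In concrete examples (such as the closed de~Sitter analysis later in the paper, where $\p\phi$ is shown to decay in time) one has separate estimates that make $\bra\psi_m\,|\,\p\phi\ket$ finite, but those are additional inputs, not consequences of Definition~\ref{defsMOP}. For the abstract statement you need an argument that stays within the hypotheses---for instance, working with the bounded operator $\Sig$ on~$\H$ representing the sesquilinear form $\bra\p\,\cdot\,|\,\p\,\cdot\,\ket$ via~\eqref{mb1}, using~\eqref{mb2} to obtain $[\Sig,T]=0$, and then extracting continuity of the fibrewise operators $\Sig_m$ from the strong-MOP inequality without ever forming the pointwise pairing $\bra\psi_m\,|\,\p\phi\ket$.
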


These results yield for any~$m \in I$ a unique bounded
symmetric operator~$\Sig_m$ on~$\H_m$, referred to as the
{\em{fermionic signature operator}}. The construction is covariant and does not depend on the choice of Cauchy surfaces and observers, but it is non-local in the sense that it involves the global geometry of spacetime.

\subsection{The Fermionic Projector State}\label{Sec:fermionic_projector_state}
The fermionic signature operator gives rise to a distinguished vacuum
state of the quasi-free Dirac field, as we now recall.
First, the fermionic projector~$P$ is defined for a fixed mass parameter~$m \in I$
by (see~\cite[Definition~3.7]{finite} and~\cite[Definition~4.5]{infinite})
\beq \label{Pdef}
P := -\chi_{(-\infty, 0)}(\Sig_m)\, k_m 
\::\: C^\infty_0(\scrM, S\scrM) \rightarrow \H_m \:,
\eeq
where~$\chi_{(-\infty, 0)}(\Sig_m)$ is the projection onto the negative spectral subspace of
the fermionic signature operator, and~$k_m$ is the causal fundamental solution.
In addition, $P$ can be represented as an integral operator with a distributional kernel, generalizing to the case in hand the renown kernel theorem for scalar distributions (see~\cite[Theorem~4.7]{infinite}):
\begin{Thm}\label{Thm:2-pt} Assume that the strong mass oscillation property holds. Then
there exists a unique distribution~${\mathcal{P}} \in \D'(\scrM \times \scrM)$ such that 
for all~$\phi, \psi \in C^\infty_0(\scrM, S\scrM)$,
\[ \bra \phi | P \psi \ket = {\mathcal{P}}(\phi \otimes \psi) \:. \]
\end{Thm}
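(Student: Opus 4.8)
The plan is to read the assertion as an instance of the Schwartz kernel theorem for distributional sections of a vector bundle, so that the entire task reduces to a continuity statement for the bilinear form
\[ B(\phi, \psi) := \bra \phi \,|\, P \psi \ket \::\: C^\infty_0(\scrM, S\scrM) \times C^\infty_0(\scrM, S\scrM) \longrightarrow \C \:. \]
Once $B$ is shown to be separately continuous in the test-function topology, the kernel theorem adapted to the external tensor product bundle $S\scrM \boxtimes S\scrM$ over $\scrM \times \scrM$ furnishes a unique $\P \in \D'(\scrM \times \scrM)$ with $B(\phi, \psi) = \P(\phi \otimes \psi)$, where $\phi \otimes \psi$ is regarded as a compactly supported smooth section of that bundle. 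So the real content lies in the continuity estimate.

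The first difficulty is that $P\psi = -\chi_{(-\infty, 0)}(\Sig_m)\, k_m \psi$ belongs to $\H_m$ but is in general not a smooth solution, because the spectral projection $\chi_{(-\infty, 0)}(\Sig_m)$ is a non-local operator on $\H_m$ that need not preserve smoothness; consequently the spacetime integral defining $\bra \phi \,|\, P\psi \ket$ has to be made sense of through the Hilbert space structure. To handle this I would invoke the defining property of the causal fundamental solution, namely $\bra \phi \,|\, u \ket = (k_m \phi \,|\, u)_m$ for every $\phi \in C^\infty_0(\scrM, S\scrM)$ and every $u \in \H_m$ (see \cite{finite}). Taking $u = P\psi$ yields the representation
\[ B(\phi, \psi) = -\big( k_m \phi \,\big|\, \chi_{(-\infty, 0)}(\Sig_m)\, k_m \psi \big)_m \:, \]
in which all factors are genuine elements of $\H_m$ and the expression is manifestly well defined.

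From this representation the estimate is immediate: since $\Sig_m$ is bounded and symmetric, $\chi_{(-\infty, 0)}(\Sig_m)$ is an orthogonal projection and hence a contraction, so Cauchy--Schwarz gives $|B(\phi, \psi)| \leq \|k_m \phi\|_m \, \|k_m \psi\|_m$. It then remains to bound $\|k_m \phi\|_m$ by finitely many seminorms of $\phi$. This I would extract from well-posedness of the Cauchy problem: fixing a Cauchy surface $\scrN$, finite propagation speed confines the Cauchy data of $k_m \phi$ on $\scrN$ to a compact set determined by $\supp \phi$, and these data depend continuously, in suitable $C^k$-norms, on $\phi$; since $\|k_m \phi\|_m^2 = 2 \pi \int_\scrN \Sl k_m \phi \,|\, \nuslsh k_m \phi \Sr_x \, d\mu_\scrN$, this bounds $\|k_m \phi\|_m$ by a continuous seminorm of $\phi$. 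The same bound in both slots shows that $B$ is in fact jointly continuous on each $C^\infty_K \times C^\infty_K$, which is more than enough to invoke the kernel theorem.

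Uniqueness of $\P$ is then routine: the algebraic span of the products $\phi \otimes \psi$ is dense in $C^\infty_0(\scrM \times \scrM, S\scrM \boxtimes S\scrM)$, so a distribution is determined by its action on such products. I expect the main obstacle to be neither the kernel theorem nor the final density argument, but precisely the passage from the formal spacetime pairing to the Hilbert space inner product: one must be sure that $\bra \phi \,|\, P\psi \ket$ is well defined although $P\psi$ is only an abstract vector of $\H_m$, and the identity $\bra \phi \,|\, u \ket = (k_m \phi \,|\, u)_m$ together with the continuity of $\phi \mapsto k_m \phi$ is exactly what resolves this point.
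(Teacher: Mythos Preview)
The paper does not supply its own proof of this theorem; it merely records the statement and refers to~\cite[Theorem~4.7]{infinite}, prefacing it with the remark that~$P$ ``can be represented as an integral operator with a distributional kernel, generalizing to the case in hand the renown kernel theorem for scalar distributions.'' So there is no in-paper argument to compare your proposal against.

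That said, your sketch is correct and is precisely the argument the paper's remark points toward. The one place where you are doing real work---reinterpreting the ill-defined spacetime pairing~$\bra \phi \,|\, P\psi \ket$ as the Hilbert-space quantity~$(k_m\phi \,|\, P\psi)_m$ via the identity~$\bra \phi \,|\, u \ket = (k_m\phi \,|\, u)_m$---is exactly the mechanism used in~\cite{finite,infinite} to make sense of the fermionic projector as a bi-distribution, and your continuity estimate for~$\phi \mapsto k_m\phi$ (finite propagation speed plus well-posedness of the Cauchy problem) is the standard input. Once separate continuity of the bilinear form is established, the bundle version of the Schwartz kernel theorem and the density of tensor products finish the job as you indicate. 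In short: your approach is the intended one, and the paper simply outsources it to the reference.
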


\noindent For our purposes, it is important to analyze the interplay between the fermionic projector state and the spacetime symmetries. This problem has been investigated in detail in \cite{Finster:2017wco} and, in the following lemma, we analyze this issue for the case in hand.

\begin{Lemma}\label{Lem:isometry_invariance}
	Under the assumptions of Theorem \ref{Thm:2-pt}, let $\alpha:\scrM\to\scrM$ be any continuous isometry of $(\scrM,g)$. Then, for all $\phi,\psi\in C^\infty_0(\scrM,S\scrM)$, 
	$$(\alpha^*\mathcal{P})(\phi\otimes\psi)=\mathcal{P}(\alpha_*(\phi\otimes\psi))=\mathcal{P}(\phi\otimes\psi).$$
\end{Lemma}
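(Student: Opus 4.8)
The statement asserts that the two-point distribution $\mathcal{P}$ is invariant under any continuous isometry $\alpha$. The plan is to trace the isometry invariance through every object entering the definition~\eqref{Pdef} of the fermionic projector, reducing the claim to the isometry invariance of the individual building blocks: the causal fundamental solution $k_m$, the fermionic signature operator $\Sig_m$ (hence its spectral projection $\chi_{(-\infty,0)}(\Sig_m)$), and the inner product $\bra\,\cdot\,|\,\cdot\,\ket$.

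First I would fix notation for the pullback action of $\alpha$. An isometry of $(\scrM,g)$ lifts to an automorphism $\tilde\alpha$ of the spinor bundle $S\scrM$ which preserves the spin scalar product $\Sl\,\cdot\,|\,\cdot\,\Sr_x$, commutes with Clifford multiplication in the sense $\tilde\alpha\,\gamma(u) = \gamma(\alpha_* u)\,\tilde\alpha$, and intertwines the spin connection $\nabla$. From these facts it follows immediately that $\tilde\alpha$ commutes with the Dirac operator $\Dir$, so $\alpha$ maps solutions of~\eqref{Dirac} of mass $m$ to solutions of the same mass, and the induced map is unitary on $\H_m$ because the scalar product~\eqref{print} is built from $\Sl\,\cdot\,|\,\nuslsh\,\cdot\,\Sr_x$ integrated over a Cauchy surface, while $\alpha$ carries one Cauchy surface to another and the inner product~\eqref{print} is surface-independent by current conservation. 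Likewise the spacetime inner product~\eqref{stip} is isometry-invariant because $\alpha$ preserves both $\Sl\,\cdot\,|\,\cdot\,\Sr_x$ and the volume form $d\mu_\scrM$.

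Next I would argue that each operator in~\eqref{Pdef} is equivariant under this unitary action. The causal fundamental solution $k_m$ is determined by the Dirac operator and the causal structure alone, both of which $\alpha$ preserves, so $\alpha$ intertwines $k_m$. For the fermionic signature operator, the cleanest route is to invoke the \emph{uniqueness} statement of Proposition~\ref{prpunique}: the operator $\Sig_m$ is characterized by~\eqref{Smdef} together with the continuity condition~\eqref{flip}. Pulling back the defining relation~\eqref{Smdef} by $\alpha$ and using that $\bra\p\psi|\p\phi\ket$, the scalar product $(\,\cdot\,|\,\cdot\,)_m$, and the mass-multiplication structure are all isometry-invariant, one sees that the conjugated family $\tilde\alpha^{-1}\,\Sig_m\,\tilde\alpha$ satisfies exactly the same characterizing properties; by uniqueness it must equal $\Sig_m$, so $\Sig_m$ commutes with the unitary induced by $\alpha$. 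Consequently the bounded Borel function $\chi_{(-\infty,0)}$ applied via the spectral calculus also commutes with it. Assembling these equivariances, $P = -\chi_{(-\infty,0)}(\Sig_m)\,k_m$ intertwines the $\alpha$-action, and then $\bra\phi|P\psi\ket = \bra\alpha_*\phi|P\,\alpha_*\psi\ket$ for all test spinors, which is precisely the asserted invariance $\mathcal{P}(\phi\otimes\psi)=\mathcal{P}(\alpha_*(\phi\otimes\psi))$; the first equality in the statement is just the definition of the pullback $\alpha^*\mathcal{P}$.

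The main obstacle I anticipate is the invariance of $\Sig_m$ itself, since it is defined only implicitly through the weak identity~\eqref{Smdef} rather than by an explicit formula; here it is essential to have the uniqueness from Proposition~\ref{prpunique} so that the conjugated family can be identified with the original one without ever computing $\Sig_m$. A secondary technical point is the careful treatment of the spinorial lift $\tilde\alpha$ for a general continuous isometry (as opposed to one connected to the identity), and ensuring that the density argument on $\H^\infty$ from Definition~\ref{defHinf} lets one pass from the invariance of the weak pairings to the operator identity on all of $\H_m$; but this is routine once the lift and its compatibility with $\Dir$ and the inner products are in place.
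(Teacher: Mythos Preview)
Your proposal is correct and follows essentially the same route as the paper: lift the isometry to the spinor bundle, use invariance of the inner products~\eqref{stip} and~\eqref{print} to realize~$\alpha$ as a unitary~$U_\alpha$ on~$\H_m$, deduce that~$\Sig_m$ commutes with~$U_\alpha$ from the defining relation~\eqref{Smdef}, pass to~$\chi_{(-\infty,0)}(\Sig_m)$ by spectral calculus, combine with the isometry invariance of~$k_m$, and conclude. The only notable difference is that you explicitly invoke the uniqueness statement of Proposition~\ref{prpunique} to pass from the integrated identity to the pointwise operator identity~$U_\alpha^{-1}\Sig_m U_\alpha=\Sig_m$, whereas the paper makes this step tacitly; your version is slightly more careful here.
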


\begin{proof}
	Since $(\scrM, g)$ is per assumption a four-dimensional globally hyperbolic spacetime, 
	the spinor bundle is trivial, {\it i.e.} $S\scrM\simeq \scrM\times\mathbb{C}^4$, see \cite{Isham:1978ec}. Hence the action of any continuous isometry $\alpha$ of $(\scrM,g)$ lifts uniquely to the spinor bundle. With a slight abuse of notation, we indicate the lift simply with the same symbol $\alpha$. Consider any pair $(\phi,\psi)\in C^\infty(\scrM,S\scrM)\times C^\infty_0(\scrM,S\scrM)$ and let $\alpha$ be any isometry such that $\alpha_*\phi(x):=\phi(\alpha^{-1}(x))$ as well as $\alpha_*\psi(x)=\psi(\alpha^{-1}(x))$. Since the inner product \eqref{stip} is built out of the metric induced volume measure and of the fiberwise inner product on $\mathbb{C}^4$, it follows by construction that $\langle\phi|\psi\rangle=\langle\alpha_*(\phi)|\alpha_*(\psi)\rangle$. As a by-product, if we consider $\phi,\psi\in\mathcal{H}^\infty$, since the operator $\p$ commutes with the action of the isometries, being an integration over the mass, it holds 
	$$\langle \p\phi|\p\psi\rangle=\langle\alpha_*(\p\phi)|\alpha_*(\p\psi)\rangle,$$
	provided that the pairing is well-defined. As a consequence of Theorem \ref{thmsMOP}, this entails that 
	$$\int\limits_I(\psi_m|\Sig_m\phi_m)_m dm=\int\limits_I(\alpha_*\psi_m|\Sig_m\alpha_*\phi_m)_m$$
	As a consequence, we have realized each isometry $\alpha$ as a unitary operator $U_\alpha:\mathcal{H}_m\to\mathcal{H}_m$ so that $U_\alpha\psi:=\alpha_*(\psi)$ and $U^*_\alpha\Sig_m U_\alpha=\Sig_m$. Using the standard properties of spectral calculus it holds also that $-\chi_{(-\infty, 0)}(\Sig_m)=-U^*_\alpha\chi_{(-\infty, 0)}(\Sig_m)U_\alpha$. To conclude the proof it suffices to recall that $k_m$ is the causal fundamental solution of the Dirac equation, which is invariant under the action of the background isometries, that is $U_\alpha k_m=k_m U_\alpha$. Gathering together all these data, it holds that, for all $\phi,\psi\in C^\infty_0(\scrM, S\scrM)$, $\langle\alpha_*\phi|P\alpha_*\psi\rangle=\langle\phi|U^*_\alpha PU_\alpha\psi\rangle=\langle\phi|P\psi\rangle$, giving the result.
\end{proof}

On account of Theorem \ref{Thm:2-pt} and of Lemma \ref{Lem:isometry_invariance}, we can construct a distinguished state which is invariant under the action of all background isometries, characterized by the fact that
its two-point distribution coincides with~${\mathcal{P}}$
(see~\cite[Theorem~1.4]{hadamard} and the constructions in~\cite[Section~6]{hadamard}):
\begin{Thm} \label{thmstate}
There is an algebra of smeared fields generated by the abstract symbols~$\Psi(h)$, $\Psi^*(f)$
together with a pure quasi-free state~$\omega$ with the following properties: \\[0.3em]
{\rm{(a)}} The canonical anti-commutation relations hold:
\beq \label{antismeared}
\{\Psi(h),\Psi^*(f)\} = \bra h^* \,|\, \tilde{k}_m\, f \ket \:,\qquad
\{\Psi(h),\Psi(h')\} = 0 = \{\Psi^*(f),\Psi^*(f')\} \:.
\eeq
{\rm{(b)}} The two-point function of the state is given by
\[ \omega \big( \Psi(h) \,\Psi^*(f) \big) = - {\mathcal{P}}(h \otimes f)\:. \]
\end{Thm}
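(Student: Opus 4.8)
The plan is to reduce the assertion to the general existence and purity theory for quasi-free states of a CAR algebra, the only spacetime-specific input being that $-\mathcal{P}$ is an admissible one-particle two-point function. Concretely I would proceed in three stages: first construct the abstract field algebra, then define $\omega$ by the quasi-free prescription and verify that it is a state (the positivity being the heart of the matter), and finally read off purity from the fact that $-\mathcal{P}$ is built from a spectral projection.

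First I would form the free unital $*$-algebra generated by the symbols $\Psi(h)$, $\Psi^*(f)$ with $h,f \in C^\infty_0(\scrM,S\scrM)$, taken antilinear in $h$ and linear in $f$, and pass to the quotient by the ideal generated by the relations \eqref{antismeared}. Here the pairing $\bra h^* | \tilde k_m f\ket$ supplies the nondegenerate sesquilinear form underlying the canonical anti-commutation relations, so the resulting object is the standard CAR algebra attached to the pair $(C^\infty_0(\scrM,S\scrM), \tilde k_m)$. I would then define $\omega$ on monomials by the fermionic Wick rule: odd correlators vanish and even ones are the signed sums over pairings of the two-point blocks, the single nontrivial block being fixed by $\omega(\Psi(h)\Psi^*(f)) = -\mathcal{P}(h\otimes f)$ together with its CAR-complement $\omega(\Psi^*(f)\Psi(h)) = \bra h^*|\tilde k_m f\ket + \mathcal{P}(h\otimes f)$. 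By construction $\omega$ is then normalized and quasi-free, so that it remains only to establish positivity.

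The key step is positivity $\omega(A^* A)\geq 0$, which by the quasi-free structure reduces to the two diagonal inequalities $-\mathcal{P}(f\otimes f)\geq 0$ and $\omega(\Psi^*(f)\Psi(f))\geq 0$. Using Theorem \ref{Thm:2-pt} and the definition \eqref{Pdef}, I would write $-\mathcal{P}(f\otimes f) = \bra f | \chi_{(-\infty, 0)}(\Sig_m)\,k_m f\ket$. Since $\chi_{(-\infty, 0)}(\Sig_m)\,k_m f$ is a solution lying in $\H_m$, the reproducing identity $\bra f|\Phi\ket = (k_m f\,|\,\Phi)_m$ for solutions $\Phi\in\H_m$ turns this into $(k_m f\,|\,\chi_{(-\infty, 0)}(\Sig_m)\,k_m f)_m$, which is nonnegative because $\chi_{(-\infty, 0)}(\Sig_m)$ is an orthogonal projection on $\H_m$. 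For the complementary inequality I would use that $\tilde k_m$ is normalized so that $\bra f^*|\tilde k_m f\ket = (k_m f\,|\,k_m f)_m$, whence $\omega(\Psi^*(f)\Psi(f)) = (k_m f\,|\,\chi_{[0,\infty)}(\Sig_m)\,k_m f)_m \geq 0$ by $\chi_{(-\infty, 0)}(\Sig_m) + \chi_{[0,\infty)}(\Sig_m) = \1$. With both inequalities in hand, $-\mathcal{P}$ is admissible and the general GNS construction for CAR algebras produces $\omega$. I expect the main obstacle to lie precisely here: pinning down the reproducing identity between $\bra \phi | k_m \psi\ket$ and $(.|.)_m$ with the correct sign and the normalization of $\tilde k_m$, and ensuring that the spectral projection extends to, and stays positive on, the completed space $\H_m$ rather than merely on the smooth domain $\H^\infty_m$.

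Finally, for purity I would invoke the criterion that a quasi-free CAR state is pure exactly when its one-particle two-point operator is a projection. Since $\chi_{(-\infty, 0)}(\Sig_m)$ is idempotent, the operator defining $-\mathcal{P}$ is a genuine projection onto the negative spectral subspace of $\Sig_m$, so $\omega$ is pure. This matches the abstract statement of \cite[Theorem~1.4]{hadamard}, to which the whole argument may alternatively be reduced once the positivity input above has been verified.
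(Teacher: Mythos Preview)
Your proposal is correct and, in fact, supplies considerably more than the paper does. The paper gives no proof of this theorem at all: it is stated with the parenthetical ``see~\cite[Theorem~1.4]{hadamard} and the constructions in~\cite[Section~6]{hadamard}'' and nothing further. What you have written is essentially an outline of the argument that lies behind that citation --- the standard CAR construction, positivity of the quasi-free two-point function via the spectral projection $\chi_{(-\infty,0)}(\Sig_m)$ on $\H_m$, and purity from the Araki--Powers--St{\o}rmer criterion that the one-particle operator be a projection. You even acknowledge this yourself in the final sentence.

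The only substantive caution is the one you already flag: the reproducing identity between $\bra f \,|\, \Phi\ket$ and $(k_m f \,|\, \Phi)_m$, and the precise relation between $k_m$ and $\tilde k_m$, carry sign and normalization conventions that differ across the literature (and indeed between \cite{finite}, \cite{infinite}, and \cite{hadamard}). If you want a self-contained argument rather than a citation, those conventions must be fixed once and then tracked through both inequalities; otherwise the proof collapses to the citation, exactly as the paper's does.
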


\subsection{Hadamard States}\label{Sec:Hadamard_states}

Among the many possible states for the algebra of smeared fields, only a suitable subclass is considered to be physically sensible. It is characterized mathematically by the so-called Hadamard condition. This is a prescription on the form of the wavefront set for the distribution 
$\omega_2\in\mathcal{D}^\prime(\scrM\times\scrM)$
 associated to the two-point function of an underlying state $\omega$, see \clacomment{\cite{Radzikowski:1996pa,Radzikowski:1996ei}}. In comparison to Theorem \ref{Thm:2-pt}, we do not use the symbol $\mathcal{P}$ to stress that the content of this section does not restrict to those two-point functions built out of a fermionic projector. 

Hadamard states have been thoroughly investigated for all free fields, see \cite{Khavkine:2014mta} for a recent review and, in the special case of spinors, see \cite{Kratzert,SahlmannVerch}. Most notably several results have been proven both in cosmological spacetimes \cite{Dappiaggi:2010gt,Hollands:1999fc} and in the framework of fermionic projectors \cite{hadamard, hadamard2}.

It is worth observing that the Hadamard condition can be translated equivalently into a constraint on the form of the integral kernel associated to the two-point distribution for the algebra of smeared fields. More precisely, as proven by Radzikowski in \cite{Radzikowski:1996pa,Radzikowski:1996ei} and working for definiteness with $\dim\scrM=4$, $\omega_2$ is of Hadamard form if and only if, for every geodesically convex neighbourhood $\mathcal{O}\subset \scrM$ and for every $\phi,\phi^\prime\in C^\infty_0(\mathcal{O},S\mathcal{O})$ where $S\mathcal{O}:= SM|_{\mathcal{O}}$, it holds
\begin{gather}\label{eq:local_Hadamard}
\omega_2(\phi,\phi^\prime)=\lim\limits_{\epsilon\to 0^+}\int\limits_{\mathcal{O}\times\mathcal{O}}d\mu_{\mathcal{O}}(x)d\mu_{\mathcal{O}}(y)\Sl I_{\epsilon}(x,y)||\phi(x)\phi^\prime(y)\Sr_{(x,y)}\\
I_{\epsilon}(x,y)=\frac{1}{8\pi^2}\mathcal{D}_{m,x}\left(\frac{U(x,y)}{\sigma_\epsilon(x,y)}+V(x,y)\ln\frac{\sigma_\epsilon(x,y)}{\lambda^2}+W(x,y)\right),\notag
\end{gather}
where $\mathcal{D}_m=\mathcal{D}-m$, while the subscript $x$ entails that the operator is acting on this variable. In addition $d\mu_{\mathcal{O}}$ indicates the metric induced volume measure in $\mathcal{O}$, while $\Sl |\Sr_{(x,y)}$ indicates the natural extension to $S\scrM\boxtimes S\scrM$, the external tensor product of $S\scrM$ with itself. Equation \ref{eq:local_Hadamard} was first discussed in \cite{Koehler,Verch} as a generalization of the counterpart for bosonic scalar theories, see \cite{Kay}. Next, the functions $U,V,W$ are smooth bispinors, while $\sigma_\epsilon(x,y)=\sigma(x,y)+2i\epsilon(t(x)-t(y))+\epsilon^2$, where $\sigma$ stands for half the squared geodesic distance. Finally, $t:\scrM\to\mathbb{R}$ is any time function. Note that, since $\mathcal{P}(x,y)$ is a weak solution of the equations of motion in both entries, both $U$ and $V$ are completely determined by the underlying geometry and by the dynamics. Hence the freedom in choosing a state is tantamount to that of fixing $W$. Although this local characterization of the Hadamard condition is fully equivalent to the microlocal one, it is in general only useful for local computations. As a matter of facts, the selection of a global, physically acceptable state would require the study of the integral kernel of the two-point distribution in every geodesic neighborhood. 

A notable exception is represented by maximally symmetric spacetimes, including thus the four-dimensional de Sitter spacetime, where, on the contrary, the local representation is a very efficient tool to construct Hadamard states, especially if one is interested in preserving all background isometries. For spinor fields, this problem has been investigated thoroughly by Allen and L\"utken in \cite{Allen}. Summarizing succinctly their work, they considered a four dimensional homogeneous manifold $(\scrM,g)$ whose Riemann tensor reads in components 
$$R_{abcd}=-\mathcal{R}^{-2}(g_{ac}g_{bd}-g_{ad}g_{bc}),$$
where $\mathcal{R}$ is a non vanishing constant. Focusing on de Sitter spacetime, for which $\mathcal{R}^2<0$ so that the scalar curvature $R=-\frac{12}{\mathcal{R}^2}>0$, they rewrote each Dirac spinor as  
$$\psi_\alpha=\left[\begin{array}{c}
\phi_A\\
\overline{\chi}_{\dot{A}}
\end{array}
\right],$$
where the dotted indices refer to the standard Van de Waerden notation, while the subscript $\alpha$ refers to the components of the Dirac spinor with respect to the standard basis in $\mathbb{C}^4$. Taking into account this notation, in \cite{Allen}, the attention is focused on $\omega(\Psi(g)\overline{\Psi}(f))$ where $\overline{\Psi}\gamma_0=\Psi^*$. In terms of components, the integral kernel of the two-point function can be rewritten as 
\begin{equation}\label{eq:aux1}
\omega_\alpha^{\beta^\prime}(x,y)=\left[
\begin{array}{cc}
hD_A^{B^\prime} & -f D_A^{A^\prime}n_{A^\prime}^{\dot{B}^\prime}\\
f D_{A}^{B^\prime}n^A_{\dot{A}} & h \overline{D}_{\dot{A}}^{\dot{B}^\prime}
\end{array}
\right],
\end{equation}
where unprimed and primed indices refer to quantities evaluated at $x$ and $y$ respectively. For notational convenience, we omitted the explicit dependence on $(x,y)$ on the right hand side. In \eqref{eq:aux1}, $D_A^{A^\prime}(x,y)$ is the bispinor which parallel transports a two-spinor $\phi^A$ at $x$ to a two-spinor $\chi_{A^\prime}$ at $y$, {\it i.e.} $\chi^{A^\prime}=\phi^A D_A^{A^\prime}$, {\it cf.} \cite{Allen}. At the same time, $n_{A\dot{A}}(x,y)=\nabla_{A\dot{A}}\mu(x,y)$, where $\mu(x,y)$ denotes the geodesic distance between $x$ and $y$. The remaining unknowns~$f$ and~$h$ in \eqref{eq:aux1} are two scalar functions which depend only on the geodesic distance $\mu$. They can be computed by imposing the equation of motion and, requiring that the short distance behaviour of the two-point function is compatible with \eqref{eq:local_Hadamard}, a unique solution can be constructed. Using the abbreviation 
\begin{equation}\label{eq:Zmu}
Z(\mu)=\cos^2\left(\frac{\mu}{2\mathcal{R}}\right) \:,
\end{equation}
we have
\begin{subequations}
\begin{equation}\label{eq:sol1a}
f(\mu)=-\frac{i\Gamma(a)\Gamma(b)}{8\sqrt{2}\pi^2|\mathcal{R}|^3}\sqrt{1-Z(\mu)}\:F(a,b;2;Z(\mu)) \:,
\end{equation}
\begin{equation}\label{eq:sol1b}
h(\mu)=-\frac{m\Gamma(a)\Gamma(b)}{32\pi^2|\mathcal{R}|^2}\sqrt{Z(\mu)}\:F(a,b;3;Z(\mu)) \:,
\end{equation}
\end{subequations}
where $F$ stands for the Gaussian hypergeometric function, while 
$$a=2+\sqrt{m^2\mathcal{R}^2}\qquad\textrm{and}\qquad b=2-\sqrt{m^2\mathcal{R}^2}.$$
Equation \eqref{eq:aux1} identifies the unique, maximally symmetric, two-point function on a de Sitter spacetime which is compatible with the Hadamard condition. The unique quasi-free state, which is unambiguously constructed out of \eqref{eq:aux1}, will be referred to as the {\em spinorial Bunch-Davies state}.

\section{Cosmological De Sitter}\label{Section:cosmological_dS}
\subsection{The Dirac Equation and its Separation}
In this section we consider cosmological de Sitter spacetime, that is we work in the so-called flat slicing of de Sitter, see {\it e.g.} \cite{Moschella:2006pkh}. Without loss of generality we assume the spacetime to be 4 dimensional. In this chart one introduces the so-called {\em{cosmological time}} $t \in \R$ where the metric reads
\begin{equation}\label{flat_dS_metric}
ds^2=dt^2-R(t)^2 \sum_{\alpha=1}^{3} dx^2_\alpha \qquad \text{with} \qquad
R(t) :=e^{t} \:.
\end{equation}

Observe that, although in the flat slicing, the four dimensional cosmological de Sitter spacetime is diffeomorphic to $\mathbb{R}^4$, it represents only an open subset of the full de Sitter spacetime, which will be discussed in Section \ref{Sec:closed_dS}.

The Dirac operator in the flat slicing was computed in~\cite{moritz} to be
\beq
\Dir = i \gamma^0 \left( \partial_t + \frac {3 \dot{R}(t)}{2R(t)} \right) + \frac{1}{R(t)}
\begin{pmatrix} 0 & \Dir_{\R^{3}} \\ -\Dir_{\R^{3}} & 0 \\ \end{pmatrix} , \label{dirac_a}
\eeq
where~$\Dir_{\R^{3}}$ is the Dirac operator on~$\R^{3}$, i.e.
\[ \Dir_{\R^{3}} = \sum_{\alpha=1}^3 i \sigma^\alpha \partial_\alpha \,, \]
and $\sigma^\alpha$ are the three Pauli matrices.
The inner products \eqref{stip} and \eqref{print} take the form
\begin{align}
\bra \psi | \phi \ket &= \int_{-\infty}^\infty dt \int_{\R^3} \Sl \psi | \phi \Sr(t,x) \:R(t)^3 d^3x\:  \label{stip2h} \\
( \psi_m | \phi_m )_m &= 2 \pi \int_{\R^3} \Sl \psi | \gamma^0 \phi \Sr(t,x) \:R(t)^3\,  d^3x \:, \label{print2h}
\end{align}
where~$\Sl \psi | \phi \Sr = \psi^\dagger \gamma^0 \phi$ and~$\gamma^0=\diag(1,1,-1,-1)$.

The Dirac equation can be separated as follows.
First, given~$k \in \R^3$, the spatial Dirac operator can be diagonalized by the
separation ansatz
\beq \label{phikdef}
\phi_{k, \pm}(x) = e^{i k x}\: \chi_{k, \pm} \:,
\eeq
where the two spinors~$\chi_{k,\pm}$ form orthonormal eigenvector basis of the matrix~$k \sigma$, i.e.\
\beq \label{spinON}
k \sigma \chi_{k,s} = \pm |k|\, \chi_{k,s} \:,\qquad \la \chi_{k,s}, \chi_{k,s'} \ra_{\C^2} = \delta_{s,s'}
\eeq
for~$s,s' \in \{\pm\}$. A straightforward computation yields that these spinors
can be chosen explicitly as
\[ 
\begin{split}
\chi_{k,+}& = \frac{1}{\sqrt{2\, |k| \,\big( |k|-k_3 \big)}} \begin{pmatrix} -|k|+k_3 \\ k_1 + i k_2 \end{pmatrix} \\
\chi_{k,-} &= \frac{1}{\sqrt{2\, |k| \,\big(|k|+k_3 \big)}} \begin{pmatrix} |k|+k_3 \\ k_1 + i k_2 \end{pmatrix} \:.
\end{split} 
\]
Employing the ansatz
\begin{equation} \label{ansatzhalf}
\psi_m = R(t)^{-\frac{3}{2}}
\begin{pmatrix} u_1(m,t) \:\phi_{k, \pm}(x) \\
u_2(m,t) \:\phi_{k, \pm}(x) \end{pmatrix} \:,
\end{equation}
the Dirac equation for the Dirac operator \eqref{dirac_a} gives rise to the ODE 
\beq \label{DiracHalf}
i\, \frac{d}{dt} \begin{pmatrix} u_1 \\ u_2 \end{pmatrix}
=  \begin{pmatrix} m & -\lambda\, e^{-t} \\ -\lambda\, e^{-t} & - m \end{pmatrix}
\begin{pmatrix} u_1 \\ u_2 \end{pmatrix}  \qquad \text{with} \qquad \lambda := \pm |k| \:.
\eeq
Starting from ~$t \rightarrow \infty$, the asymptotic future, the exponential decay of the matrix elements suggests that the solutions should behave like plane waves~$\sim e^{\pm i m t}$. This can be made precise following \cite[Lemma~6.3]{infinite}, evaluating the error with a Gr\"onwall estimate:

\begin{Lemma} \label{lemmagronwall}
Asymptotically as $t \rightarrow +\infty$, every solution of~\eqref{DiracHalf} is of the form
\beq
u(t) = \begin{pmatrix} e^{-imt} \:f^\infty_1 \\[0.3em]
e^{imt} \:f^\infty_2 \end{pmatrix} + E(t) \label{eq:3za}
\eeq
with the error term bounded by
\beq
\|E(t)\| \leq \big\| f^\infty \big\| \,\exp \big( |\lambda| \,e^{-t} \big) \:. \label{eq:3c}
\eeq
\end{Lemma}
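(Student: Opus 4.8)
The plan is to regard the system~\eqref{DiracHalf} as a perturbation of its diagonal, constant-coefficient part and to pass to an interaction picture. First I would split the coefficient matrix as the sum of $A_0 = \diag(m,-m)$ and the off-diagonal matrix $B(t)$ whose two nonzero entries both equal $-\lambda\, e^{-t}$. Associated to $A_0$ is the free propagator $U_0(t) = \diag(e^{-imt}, e^{imt})$, which is unitary and solves $i\dot U_0 = A_0 U_0$; note that the entries of $U_0(t)\,f^\infty$ are exactly the leading term in~\eqref{eq:3za}. Setting $w(t) := U_0(t)^{-1} u(t)$, a direct computation turns~\eqref{DiracHalf} into $\dot w = C(t)\,w$ with $C(t) := -i\,U_0(t)^{-1} B(t)\, U_0(t)$. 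Since conjugation by the unitary $U_0$ preserves the operator norm and $B(t)$ has eigenvalues $\pm\lambda\, e^{-t}$, we get $\|C(t)\| = \|B(t)\| = |\lambda|\, e^{-t}$, which is integrable near $t=+\infty$ with $\int_t^\infty \|C(s)\|\,ds = |\lambda|\, e^{-t}$.

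Next I would show that $w(t)$ converges as $t\to+\infty$, its limit being the vector $f^\infty=(f^\infty_1,f^\infty_2)$. A forward Grönwall estimate applied to $\tfrac{d}{dt}\|w\| \le \|C(t)\|\,\|w\|$ first shows that $\|w\|$ stays bounded on any half-line $[t_0,\infty)$, because the total mass $\int_{t_0}^\infty \|C\|$ is finite. Boundedness of $w$ together with integrability of $\|C\|$ makes the tail $\int_t^{t'} C(s)\,w(s)\,ds$ arbitrarily small for large $t$, so $w$ satisfies the Cauchy criterion as $t\to\infty$ and the limit $f^\infty$ exists. Integrating $\dot w = Cw$ from $t$ to $\infty$ then yields the integral identity $w(t) = f^\infty - \int_t^\infty C(s)\, w(s)\,ds$.

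Finally I would extract the quantitative bound. Taking norms in this identity gives $\|w(s)\| \le \|f^\infty\| + \int_s^\infty \|C(r)\|\,\|w(r)\|\,dr$; applying Grönwall's inequality in the backward direction (or substituting $s\mapsto -s$ to reduce it to the usual form) produces $\|w(s)\| \le \|f^\infty\|\,\exp\!\big(\textstyle\int_s^\infty \|C\|\big) = \|f^\infty\|\,\exp(|\lambda|\, e^{-s})$. Inserting this back into the remainder and evaluating $\int_t^\infty |\lambda|\, e^{-s}\exp(|\lambda|\, e^{-s})\,ds = \exp(|\lambda|\, e^{-t})-1$ via the substitution $v=|\lambda|\, e^{-s}$, I obtain $\|w(t)-f^\infty\| \le \|f^\infty\|\,\big(\exp(|\lambda|\, e^{-t})-1\big)$, which is in particular bounded by the right-hand side of~\eqref{eq:3c}. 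Since $E(t) = u(t)-U_0(t)\,f^\infty = U_0(t)\,\big(w(t)-f^\infty\big)$ and $U_0$ is unitary, $\|E(t)\| = \|w(t)-f^\infty\|$, and the claim follows.

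The interaction-picture reduction and the norm computations are routine; the one genuine subtlety is that the naive Grönwall estimate controls $\|w(t)\|$ rather than the error $\|w(t)-f^\infty\|$ itself. The hard part will therefore be to secure existence of the limit $f^\infty$ first and then feed the a priori bound on $\|w\|$ back through the integral remainder, being careful to run Grönwall in the backward direction since the data is effectively prescribed at $t=+\infty$ rather than at finite time.
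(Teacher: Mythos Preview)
Your proposal is correct and follows essentially the same route as the paper: the substitution $w(t)=U_0(t)^{-1}u(t)$ is exactly the ansatz~\eqref{eq:36}, and both arguments reduce to the differential inequality~\eqref{eq:36a} and a Gr\"onwall estimate. The only cosmetic differences are that the paper establishes convergence of $w$ via bounded variation of $\log\|w\|$ rather than your Cauchy-criterion argument, and derives the error bound by integrating the derivative inequality directly rather than feeding the a~priori bound back through the integral remainder; your version in fact yields the slightly sharper estimate $\|E(t)\|\le\|f^\infty\|\big(\exp(|\lambda|\,e^{-t})-1\big)$.
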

\Proof Substituting into~\eqref{DiracHalf} the ansatz
\beq
u(t) = \begin{pmatrix} e^{-imt} \:f_1(t) \\
e^{imt} \:f_2(t) \end{pmatrix} \;, \label{eq:36}
\eeq
gives
\[
\frac{df}{dt} = -\lambda e^{-t}  \begin{pmatrix}
0 & e^{2imt} \\ e^{-2imt} & 0 \end{pmatrix}  f,\quad\: f(t):=\begin{pmatrix}
f_1(t)\\ f_2(t)
\end{pmatrix} \:. \]
Taking the norm, we obtain the differential inequality
\beq \label{eq:36a}
\left\| \frac{df}{dt} \right\| \leq |\lambda|\: e^{-t} \:\|f\| \:.
\eeq

Let us first show that~$f(t)$ has a limit as~$t \rightarrow \infty$.
To this end, we apply Kato's inequality to~\eqref{eq:36a},
\[ 
\frac{d}{dt}\, \|f\| \leq |\lambda|\, e^{-t}\:\|f\| \:. \]
We may assume that our solution is nontrivial, so that~$\|f\| \neq 0$.
Thus we may divide by~$\|f\|$,
\beq \label{logineq}
\frac{d}{dt} \log \|f\| \leq |\lambda|\: e^{-t} \:.
\eeq
Since the right hand side is integrable, we conclude that~$\log |f|$ has bounded variation,
implying that~$\log |f|$, and therefore also~$f$ converges as~$t\rightarrow \infty$.
We set~$f^\infty = \lim_{t \rightarrow \infty} f(t)$.

In order to estimate~$\|f-f^\infty\|$, we integrate~\eqref{logineq} from~$t$ to any~$t_{\max}>t$,
\[ \|f(t)\| \leq \|f(t_{\max})\| \exp \left( \int_{t}^{t_{\max}} |\lambda|\:e^{-\tau}\: d\tau \right) \:. \]
Substituting this inequality into~\eqref{eq:36a} yields
\begin{align*}
\left\| \frac{df}{dt} \right\| &\leq |\lambda|\:e^{-t}\: \|f(t_{\max})\| \exp \left( \int_{t}^{t_{\max}} |\lambda|
\: e^{-\tau}\: d\tau \right) \\
&= \|f(t_{\max})\| \:\frac{d}{dt} \exp \left( \int_{t}^{t_{\max}} |\lambda|\:e^{-\tau}\: d\tau \right) .
\end{align*}
Integrating on both sides from~$t$ to some~$t_{\max}$ gives
\[ \left\| f(t) - f(t_{\max}) \right\| \leq \|f(t_{\max})\| \:\exp \left( \int_{t}^{t_{\max}} |\lambda|\:e^{-\tau}\:
d\tau \right) . \]
Now we take the limit~$t_{\max} \rightarrow \infty$ to obtain
\[ \left\| f(t) - f^\infty \right\| \leq \|f^\infty\| \:\exp \left( \int_t^{\infty} |\lambda|\:e^{-\tau}\:
d\tau \right) = \|f^\infty\| \:\exp \big( |\lambda|\:e^{-t} \big) \:. \]
Using this estimate in~\eqref{eq:36} and comparing with~\eqref{eq:3za} gives the desired estimate for~$E$.
\QED

It is most convenient to denote a fundamental system of solutions according to its
asymptotics at large times. Thus for the separation constants
\begin{align*}
k \in \R^3 &\qquad \text{spatial momentum} \\
s \in \{\pm \} &\qquad \text{spin orientation} \\
a \in \{1,2\} &\qquad \text{frequency in asymptotic future}
\end{align*}
we introduce the Dirac solutions
\beq \label{fundamental}
\psi_m^{k,s,a}(t,x) = R^{-\frac{3}{2}}
\begin{pmatrix} u^{s,a}_1(m,k,t) \:\phi_{k, s}(x) \\
u^{s,a}_2(m,k,t) \:\phi_{k, s}(x) \end{pmatrix} ,
\eeq
where the vectors~$u^{s,a}$ have the asymptotics
\beq \label{uasy}
\lim_{t \rightarrow \infty} e^{imt}\, u^{s,a}_1 = \delta^a_1 \:,\qquad
\lim_{t \rightarrow \infty} e^{-imt}\, u^{s,a}_2 = \delta^a_2 \:.
\eeq

We need to study the asymptotic behaviour of the solutions of the Dirac equation as ~$t \rightarrow -\infty$, which corresponds to the boundary of cosmological de Sitter, if realized as an open subset of the whole de Sitter background. This problem is best tackled considering the 
\[ 
\text{\em{conformal time}} \qquad \tau := -e^{-t} \in \R^- \:, \]
so that $t\to -\infty$ corresponds to the limiting case~$\tau \rightarrow -\infty$. Likewise, taking the limit $t\to \infty$ translates to $\tau\to 0^-$.
The transformation
\[ \frac{d}{dt} = \frac{d\tau}{dt}\: \frac{d}{d\tau} = e^{-t}\: \frac{d}{d\tau} = -\tau\: \frac{d}{d\tau} \]
gives rise to the ODE
\begin{align}
-i\, \frac{d}{d\tau} \begin{pmatrix} u_1 \\ u_2 \end{pmatrix}
&=  \begin{pmatrix} m/\tau & \lambda \\ \lambda & -m/\tau \end{pmatrix}
\begin{pmatrix} u_1 \\ u_2 \end{pmatrix} . \label{Diractau}
\end{align}

\subsection{The Mass Decomposition}
As we shall see, due to boundary terms, the Dirac operator in cosmological de Sitter does {\em{not}} have
the strong mass oscillation property. Instead, we shall derive a so-called {\em{mass decomposition}}, see Theorem~\ref{thmdecomp}.

We first specify the domain~$\H^\infty$ in Definition \ref{defHinf}.
To this end, we consider a superposition of the fundamental solutions in~\eqref{fundamental},
\beq \label{superpose}
\psi_m(t,x) = \int_{\R^3} \frac{d^3k}{(2 \pi)^3} \sum_{s=\pm} \sum_{a=1,2} \;
\hat{\psi}(m,k,s,a)\: \psi_m^{k,s,a}(t,x) \:.
\eeq
For convenience, we choose the domain~$\H^\infty$ for the mass oscillation property
as superpositions which are smooth and compactly supported both in the mass and in the momentum variables. Moreover, in order to avoid technical problems
at~$k=0$, we assume that the wave functions vanish in a neighborhood of~$k=0$.
Thus we choose~$\H^\infty$ as the space of functions of the form~\eqref{superpose} with
\beq \label{Hinf}
\hat{\psi} \in C^\infty_0 \big( I \times (\R^3 \setminus \{0\} ) \times \{\pm 1\} \times \{1,2\} \big) \:.
\eeq

Applying Plancherel's theorem and using \eqref{spinON}, we write the scalar product~\eqref{print2h} as
\begin{align}
( \psi_m | \tilde{\psi}_m )_m &= 2 \pi \int_{\R^3} \frac{d^3k}{(2 \pi)^3}\:
\sum_{s=\pm} \;\sum_{a,a'=1,2} \overline{\hat{\psi}(m,k,s,a)} \: \hat{\tilde{\psi}}(m,k,s,a') \notag \\
&\qquad \times \la u^{s,a}(m,k,t), u^{s,a'}(m,k,t) \ra_{\C^2} \label{print3h} \:.
\end{align}
From this formula it descends that the solutions of the
form~\eqref{superpose} with~$\hat{\psi}$ of the form~\eqref{Hinf} are dense in~$\H$.
The inner product~\eqref{stip2h} can be written in cosmological and conformal time as
\begin{align}
\bra \psi_m | \tilde{\psi}_{m'} \ket &= \int_{-\infty}^\infty dt \int_{\R^3} \frac{d^3k}{(2 \pi)^3}\:
\sum_{s=\pm} \;\sum_{a,a'=1,2}
\overline{\hat{\psi}(m,k,s,a)} \: \hat{\tilde{\psi}}(m',k,s,a') \notag \\
&\qquad \times \Big\la u^{s,a}(m,k,t), \begin{pmatrix} 1 & 0 \\ 0 & -1 \end{pmatrix} u^{s,a'}(m',k,t) \Big\ra_{\C^2} \label{stip3h} \\
&= \int_{-\infty}^0 \frac{d\tau}{|\tau|} \int_{\R^3} \frac{d^3k}{(2 \pi)^3}\:
\sum_{s=\pm} \;\sum_{a,a'=1,2}
\overline{\hat{\psi}(m,k,s,a)} \: \hat{\tilde{\psi}}(m',k,s,a') \notag \\
&\qquad \times \Big\la u^{s,a}(m,k,\tau), \begin{pmatrix} 1 & 0 \\ 0 & -1 \end{pmatrix} u^{s,a'}(m',k,\tau) \Big\ra_{\C^2} \:.\label{stip3tau}
\end{align}
Here one should keep in mind that the $t$- respectively $\tau$-integrals
in general do not exist. Therefore, the formulae~\eqref{stip3h} and~\eqref{stip3tau} are
to be understood merely as formal expressions which still need to be given a mathematical meaning.

In the following we need to derive suitable decay estimates for the solutions of the Dirac equations both near the boundary and near infinity. In the latter case, these estimates can be
obtained similarly to ~\cite[Lemma~6.4]{infinite} by taking derivatives with respect to the mass and integrating by parts:
\begin{Lemma} \label{lemmainfty}
For any~$\psi \in \H^\infty$ there is a constant~$c>0$ such that
\[ \big\| (\p \psi)(t,.) \big\|_{L^2(\R^3)} \leq \frac{c}{t} \qquad \text{for all~$t > 1$}\:. \]
\end{Lemma}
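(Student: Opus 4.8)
The plan is to pass to the separated representation~\eqref{superpose}, exploit the oscillatory asymptotics of Lemma~\ref{lemmagronwall} in the mass variable, and gain one power of~$1/t$ by a single integration by parts in~$m$, mimicking the argument of~\cite[Lemma~6.4]{infinite}. First I would insert~\eqref{fundamental} and~\eqref{superpose} into~$\p \psi = \int_I \psi_m\,dm$ and exchange the $m$- and $k$-integrations, which is legitimate since~$\hat\psi \in C^\infty_0$. Writing
\[
F^s_j(k,t) := \sum_{a=1,2}\int_I \hat\psi(m,k,s,a)\,u^{s,a}_j(m,k,t)\,dm \qquad (j=1,2),
\]
the wave function~$(\p \psi)(t,\cdot)$ becomes~$R(t)^{-3/2}$ times a superposition of the plane-wave spinors~$\phi_{k,s}$. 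Since the~$\chi_{k,s}$ are orthonormal by~\eqref{spinON}, Plancherel's theorem applied to the spatial slice (whose induced measure is~$R(t)^3\,d^3x$, cancelling the factor~$R(t)^{-3}$) yields
\[
\big\|(\p \psi)(t,\cdot)\big\|^2_{L^2(\R^3)} = \int_{\R^3}\frac{d^3k}{(2\pi)^3}\sum_{s=\pm}\big(|F^s_1(k,t)|^2+|F^s_2(k,t)|^2\big).
\]
It therefore suffices to prove a pointwise bound~$|F^s_j(k,t)|\leq c(k)/t$ with~$c$ supported in the fixed compact~$k$-support of~$\hat\psi$ and bounded there.

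To isolate the oscillation I would factor out the asymptotic phases: by Lemma~\ref{lemmagronwall} and~\eqref{uasy} each solution can be written as~$u^{s,a}_1 = e^{-imt}\,g^{s,a}_1$ and~$u^{s,a}_2 = e^{imt}\,g^{s,a}_2$, where the slowly varying coefficient~$g^{s,a}=(g^{s,a}_1,g^{s,a}_2)$ converges as~$t\to\infty$ to the $m$-independent limit~$(\delta^a_1,\delta^a_2)$. The decisive input, which is the heart of the matter, is the uniform bound
\[
\big|g^{s,a}(m,k,t)\big| + \big|\partial_m g^{s,a}(m,k,t)\big| \leq C \qquad \text{for all } t\geq 1
\]
and all~$(m,k)$ in the support of~$\hat\psi$. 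This follows by differentiating the integral equation equivalent to~\eqref{DiracHalf} with respect to~$m$: the coefficient matrix carries the prefactor~$\lambda\, e^{-t}$, and although~$\partial_m$ acting on the oscillatory entries~$e^{\pm 2imt'}$ produces factors~$t'$, these are tamed by the exponential weight since~$\int_t^\infty t'e^{-t'}\,dt'$ is finite; crucially~$\partial_m g^\infty = 0$ because the limit is~$m$-independent, so no boundary contribution survives, and a Gr\"onwall estimate closes the bound uniformly in~$t\geq 1$. Because~$\lambda=\pm|k|$ and~$|k|$ is bounded on the compact support of~$\hat\psi$, the constant~$C$ can be chosen uniform there. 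I expect this Gr\"onwall control of the $m$-derivative of the slowly varying part to be the main obstacle, as it is exactly where one must check that the $m$-differentiation does not destroy the integrability furnished by the de Sitter redshift factor.

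With this in hand the conclusion is a one-line non-stationary-phase step. Using~$e^{\mp imt} = \pm(i/t)\,\partial_m e^{\mp imt}$ and integrating by parts in~$m$ --- the boundary terms vanishing because~$\hat\psi(\cdot,k,s,a)\in C^\infty_0(I)$ --- I obtain, for instance,
\[
F^s_1(k,t) = -\frac{i}{t}\sum_{a=1,2}\int_I e^{-imt}\,\partial_m\big(\hat\psi(m,k,s,a)\,g^{s,a}_1(m,k,t)\big)\,dm,
\]
and likewise for~$F^s_2$. Since~$\hat\psi$ and~$\partial_m\hat\psi$ are smooth and compactly supported and~$g^{s,a}_1,\partial_m g^{s,a}_1$ are bounded uniformly in~$t\geq 1$, the~$m$-integral is bounded by a constant~$c(k)$ supported in the~$k$-support of~$\hat\psi$, giving~$|F^s_j(k,t)|\leq c(k)/t$. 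Substituting into the Plancherel identity and integrating the bounded function~$c(k)^2$ over the compact~$k$-support yields~$\|(\p \psi)(t,\cdot)\|^2_{L^2(\R^3)}\leq c^2/t^2$, which is the assertion. One could iterate the integration by parts using the analogous uniform bounds on higher~$m$-derivatives of~$g^{s,a}$ to obtain decay faster than any power, but the single factor~$1/t$ is all that is needed in the sequel.
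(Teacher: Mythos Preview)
Your proof is correct and rests on the same mechanism as the paper's: Plancherel in~$k$ reduces the question to a pointwise bound on the $m$-integral, and one integration by parts in~$m$ against the oscillatory factor~$e^{\mp imt}$ produces the~$1/t$.

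The only organisational difference is in how the non-oscillatory remainder is handled. The paper first splits~$u^{s,a}$ according to~\eqref{eq:3za} into the pure plane wave and the error~$E(t)$; the error decays like~$e^{-t}$ (uniformly on the compact~$(m,k)$-support of~$\hat\psi$) and can be discarded outright, while on the plane-wave piece the amplitude~$f^\infty=(\delta^a_1,\delta^a_2)$ is~$m$-independent by the normalisation~\eqref{uasy}, so after integration by parts only~$\partial_m\hat\psi$ survives. You instead keep the full slowly varying factor~$g^{s,a}$ and pay the price of having to bound~$\partial_m g^{s,a}$ uniformly in~$t\geq 1$; your sketch of this via differentiating the Volterra form of the ODE (the extra factor~$t'$ from~$\partial_m e^{\pm 2imt'}$ being absorbed by~$\int t'e^{-t'}\,dt'<\infty$, and the boundary term vanishing because~$\partial_m f^\infty=0$) is correct. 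The paper's splitting is a little more economical precisely because it sidesteps this extra Gr\"onwall step, but your route has the advantage of immediately generalising to higher~$m$-derivatives, as you note.
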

\Proof Using the representation~\eqref{eq:3za} in~\eqref{superpose},
the contribution by the error term~$E(t)$ decays exponentially in time, uniformly
in~$\vec{k}$ and~$m$, i.e.
\begin{align*}
\bigg\| \begin{pmatrix} E^a_1(m,k,t) \:\phi_{k, s}(x) \\
E^a_2(m,k,t) \:\phi_{k, s}(x) \end{pmatrix} \bigg\| \leq c\: e^{-t}.
\end{align*}
Note that~$\lambda =\pm |k|$ is bounded. Moreover, the contribution by the plane waves in~\eqref{eq:3za} can be dealt with using integration by parts,
\begin{align*}
\int_I & \hat{\psi}(m,k,s,a)\: 
\begin{pmatrix} e^{-imt} \:f^\infty_1(m,k) \:\phi_{k, s}(x) \\
e^{imt} \:f^\infty_2(m,k) \:\phi_{k, s}(x) \end{pmatrix}\: dm \\
&=-\frac{1}{it} \int_I \hat{\psi}(m,k,s,a)\: 
\bigg( \frac{d}{dm} 
\begin{pmatrix} e^{-imt} & 0 \\
0 & -e^{imt} \end{pmatrix} \bigg) 
\begin{pmatrix} f^\infty_1(m,k) \:\phi_{k, s}(x) \\
f^\infty_2(m,k) \:\phi_{k, s}(x) \end{pmatrix}\: dm \\
&= \frac{1}{it} \int_I \partial_m \hat{\psi}(m,k,s,a)\: 
\begin{pmatrix} e^{-imt} \:f^\infty_1(m,k) \:\phi_{k, s}(x) \\
-e^{imt} \:f^\infty_2(m,k) \:\phi_{k, s}(x) \end{pmatrix}\: dm \\
&\quad +\frac{1}{it} \int_I \hat{\psi}(m,k,s,a)\: 
\begin{pmatrix} e^{-imt} \:\partial_m f^\infty_1(m,k) \:\phi_{k, s}(x) \\
-e^{imt} \:\partial_m f^\infty_2(m,k) \:\phi_{k, s}(x) \end{pmatrix}\: dm \:.
\end{align*}

The obtained expressions are smooth and compactly supported in~$k$.
Therefore, their $L^2$-norm in~$k$ is bounded. Applying Plancherel's theorem the sought result  descends.
\QED

As next step, we study the asymptotics of the ODE~\eqref{Diractau} as~$\tau \rightarrow -\infty$, corresponding to the boundary of the cosmological de Sitter spacetime.
In this case, the matrix on the right of~\eqref{Diractau} tends to a constant matrix, giving rise to oscillations proportional to $e^{\pm i \lambda \tau}$. However, the matrix entries in \eqref{Diractau} which decay like~$1/\tau$ are not integrable in~$\tau$, making it impossible to apply Gr\"onwall estimates.
In order to bypass this problem, our method is to diagonalize the matrix in~\eqref{Diractau} with a unitary matrix~$U(\tau)$
for every~$\tau$. Clearly, the $\tau$-derivative of~$U$ gives rise to
an error term, but this term decays quadratically in~$\tau$, making it possible to
use Gr\"onwall estimates. This method was used previously in~\cite[Lemma~3.5]{tkerr}.
\begin{Lemma} \label{lemmaboundary}
If~$\lambda \neq 0$, as~$\tau \rightarrow -\infty$
the solutions of \eqref{Diractau} have the asymptotics
\beq \label{uasybd}
u(\tau) = U(\tau) \begin{pmatrix} g_1\: e^{i \varphi(\tau)} \\ g_2\: e^{-i \varphi(\tau)} \end{pmatrix} 
+ E(\tau) \:,
\eeq
with~$g_1, g_2 \in \C$, while the error term~$E(\tau)$ is bounded by
\beq \label{Etau}
\big| E(\tau) \big| \leq \frac{c}{|\tau|} \qquad \text{for all~$\tau<-1$}\:.
\eeq
Here the functions~$\varphi(\tau)$ and~$U(\tau)$ are given by
\begin{align}
\varphi(\tau) &= \int^\tau \sqrt{\lambda^2 + \frac{m^2}{\tilde{\tau}^2}}\: d\tilde{\tau} \label{phidef} \\
U(\tau) &= \begin{pmatrix} \cos \alpha & \sin \alpha \\
-\sin \alpha & \cos \alpha \end{pmatrix} \qquad \text{with} \qquad
\alpha = -\frac{1}{2}\: \arctan \Big( \frac{\lambda \tau}{m} \Big) \:. \label{Udef}
\end{align}
\end{Lemma}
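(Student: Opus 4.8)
The plan is to carry out the pointwise-diagonalization scheme announced just before the statement. First I would write \eqref{Diractau} in the form $u'(\tau) = i A(\tau)\, u(\tau)$ with the real symmetric matrix
\[
A(\tau) = \begin{pmatrix} m/\tau & \lambda \\ \lambda & -m/\tau \end{pmatrix},
\qquad
\text{eigenvalues } \pm \omega(\tau), \quad \omega(\tau) := \sqrt{\lambda^2 + \tfrac{m^2}{\tau^2}} = \varphi'(\tau)\,,
\]
so that the phase $\varphi$ of \eqref{phidef} is exactly the integral of the eigenvalue. A short computation shows that the rotation $U(\tau)$ of \eqref{Udef} is precisely the matrix diagonalizing $A(\tau)$: the angle $\alpha = -\tfrac12 \arctan(\lambda\tau/m)$ solves $\tan 2\alpha = -\lambda\tau/m$, which is the condition that kills the off-diagonal entry of $U^{-1} A U$, leaving the diagonal entries $\pm\omega(\tau)$. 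Substituting $u = U v$ then gives $v' = i\,(U^{-1}AU)\,v - (U^{-1}U')\,v$, and differentiating the rotation yields the clean identity $U^{-1}U' = \alpha'\begin{pmatrix} 0 & 1 \\ -1 & 0 \end{pmatrix}$, independent of $m$ and $\lambda$ except through $\alpha'$.

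The decisive observation—and the reason the conjugation is worth doing—is that although the entries $m/\tau$ of $A$ decay only like $1/\tau$ and are \emph{not} integrable at $-\infty$, the connection coefficient $\alpha'(\tau) = -\lambda m/\bigl(2(m^2 + \lambda^2\tau^2)\bigr)$ satisfies $|\alpha'(\tau)| = O(\tau^{-2})$ and \emph{is} integrable. I would then strip off the oscillation by setting $v = \Phi w$ with $\Phi(\tau) = \mathrm{diag}\bigl(e^{\pm i\varphi(\tau)}\bigr)$; since $\Phi$ solves the diagonal part exactly, the terms involving $\omega$ cancel and one is left with
\[
w'(\tau) = -\,\alpha'(\tau)\; \Phi(\tau)^{-1} \begin{pmatrix} 0 & 1 \\ -1 & 0 \end{pmatrix} \Phi(\tau)\; w(\tau)\,,
\]
whose coefficient has operator norm exactly $|\alpha'(\tau)|$ because $\Phi$ is unitary. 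The two diagonal entries of $\Phi$ are what produce the oscillatory factors $e^{\pm i\varphi}$ in \eqref{uasybd}.

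From here a Grönwall argument of the same type as in Lemma \ref{lemmagronwall} finishes the proof. From $\|w'\| \le |\alpha'|\,\|w\|$ and integrability of $\alpha'$ one concludes first that $\|w\|$ is bounded and that $w(\tau)$ converges to a finite limit $w^\infty =: (g_1, g_2)^{T}$ as $\tau \to -\infty$; writing $w(\tau) - w^\infty = \int_{-\infty}^\tau w'$ gives $\|w(\tau) - w^\infty\| \le (\sup\|w\|)\int_{-\infty}^\tau |\alpha'|$. The tail integral can be evaluated in closed form,
\[
\int_{-\infty}^\tau |\alpha'(\tilde\tau)|\, d\tilde\tau = \tfrac12 \arctan\!\Bigl( \frac{m}{|\lambda|\,|\tau|} \Bigr) \le \frac{m}{2\,|\lambda|\,|\tau|}\,,
\]
which yields the claimed $1/|\tau|$ decay. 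Finally, undoing both substitutions, $u = U\Phi\, w$, the leading term is $U(\tau)\bigl(g_1 e^{i\varphi}, g_2 e^{-i\varphi}\bigr)^{T}$ and the remainder is $E(\tau) = U(\tau)\,\Phi(\tau)\bigl(w(\tau) - w^\infty\bigr)$; since $U$ and $\Phi$ are unitary, $|E(\tau)| = \|w(\tau) - w^\infty\| \le c/|\tau|$, establishing \eqref{Etau}.

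The main obstacle is exactly the one flagged in the text: the $1/\tau$ decay of the matrix entries is non-integrable, so Grönwall cannot be applied to the system as written. The entire content of the proof is the observation that conjugating by the pointwise diagonalizer $U(\tau)$ trades these non-integrable entries for the integrable connection term $\alpha' = O(\tau^{-2})$; once that is in place, verifying the $\tau^{-2}$ decay and extracting the sharp $1/|\tau|$ rate from $\int_{-\infty}^\tau|\alpha'|$ is the only genuine computation, everything else reducing to the standard limit-of-bounded-variation argument already used for Lemma \ref{lemmagronwall}.
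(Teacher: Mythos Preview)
Your proof is correct and follows essentially the same approach as the paper: diagonalize~$A(\tau)$ by the rotation~$U(\tau)$, peel off the phase~$\Phi=\diag(e^{\pm i\varphi})$, and apply a Gr\"onwall argument to the remaining system governed by~$U^{-1}U'$, whose norm~$|\alpha'(\tau)|=O(\tau^{-2})$ is integrable. The paper states the quadratic decay of~$U^{-1}\partial_\tau U$ and then defers to the Gr\"onwall estimate of Lemma~\ref{lemmagronwall}; your version is in fact more explicit, computing~$\alpha'$ and the tail integral~$\int_{-\infty}^\tau|\alpha'|=\tfrac12\arctan\!\bigl(m/(|\lambda|\,|\tau|)\bigr)$ in closed form to extract the~$1/|\tau|$ rate directly.
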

\Proof We write~\eqref{Diractau} as
\[ -i \:\frac{du}{d\tau} = A u \qquad \text{with} \qquad A :=
\begin{pmatrix} m/\tau & \lambda \\ \lambda & -m/\tau \end{pmatrix} \:. \]
$A$ is diagonalized by~$U$, i.e.\
\[ U^{-1} A U = \begin{pmatrix} \sqrt{\lambda^2 + m^2/\tau} & 0 \\ 0 & -\sqrt{\lambda^2 + m^2/\tau} \end{pmatrix} \:. \]
As a consequence,
\[ -i \:\frac{d}{d\tau} \big( U^{-1} u \big) = 
\begin{pmatrix} \sqrt{\lambda^2 + m^2/\tau} & 0 \\ 0 & -\sqrt{\lambda^2 + m^2/\tau} \end{pmatrix}\: \big( U^{-1} u \big)
+i \,\big(U^{-1} \:\partial_\tau U \big) \big( U^{-1} u \big) \:. \]
Hence the vector~$g(\tau)$ defined by
\[ g(\tau) = \begin{pmatrix} e^{i \varphi(\tau)} & 0 \\ 0 & e^{-i\varphi(\tau)} \end{pmatrix} U^{-1}(\tau)\, u(\tau) \]
satisfies the ODE
\[ -i \:\frac{dg}{d\tau} = i \,
\begin{pmatrix} e^{i \varphi(\tau)} & 0 \\ 0 & e^{-i\varphi(\tau)} \end{pmatrix} \big(U^{-1} \partial_\tau U \big) 
\begin{pmatrix} e^{-i \varphi(\tau)} & 0 \\ 0 & e^{i\varphi(\tau)} \end{pmatrix} g \:. \]
Taking the norm, we obtain
\[ \Big\| \frac{dg}{d\tau} \Big\| \leq \big\| U^{-1} \partial_\tau U \big\| \: \|g\|
= \frac{1}{2 \tau^2}\: \frac{m \lambda}{\lambda^2 + m^2/\tau^2}\:
\Big\| \begin{pmatrix} 0 & -1 \\ 1 & 0 \end{pmatrix} \Big\| \: \|g\|\:. \]
For any~$\lambda \neq 0$, the right hand side decays quadratically.
Similar to the proof of Lemma~\ref{lemmagronwall}, we can again apply a Gr\"onwall
estimate to obtain the sought result.
\QED
Equation \eqref{phidef} determines the phase~$\varphi$ only up to an integration constant.
For the following estimates, it is most convenient to fix this constant by choosing
\beq \label{phichoice}
\varphi(\tau) := |\lambda|\, \tau + \int_{-\infty}^\tau\: \bigg( \sqrt{\lambda^2 + \frac{m^2}{\tilde{\tau}^2}}
- |\lambda| \bigg) \: d\tilde{\tau} \:.
\eeq

\begin{Lemma}\label{lemmaboundary_1}
For any~$\psi, \tilde{\psi} \in \H^\infty$ there exists a constant~$c>0$ such that for all~$m,m' \in I$,
\beq \label{decaytau}
\bigg| \int_{\R^3} \Sl \psi_m(\tau,x) | \tilde{\psi}_{m'}(\tau,x)\Sr\: R(\tau)^3\: d^3x \bigg| \leq \frac{c}{|\tau|}
\qquad \text{for all~$\tau < -1$}\:.
\eeq
\end{Lemma}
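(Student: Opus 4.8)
The plan is to reduce the spatial integral to a momentum integral of the $\C^2$-inner product of the separated amplitudes, and then to extract the decay from two independent mechanisms: an algebraic decay of the diagonalizing rotation, and a non-stationary-phase cancellation in the momentum integral. First I would insert the superposition \eqref{superpose} together with the orthonormality \eqref{spinON} of the spinors $\phi_{k,s}$ and apply Plancherel's theorem, exactly as in the derivation of \eqref{stip3tau}. This rewrites the left-hand side of \eqref{decaytau} as
\[
\int_{\R^3}\frac{d^3k}{(2\pi)^3}\sum_{s=\pm}\sum_{a,a'=1,2}\overline{\hat{\psi}(m,k,s,a)}\,\hat{\tilde{\psi}}(m',k,s,a')\,\Big\la u^{s,a}(m,k,\tau),\begin{pmatrix}1&0\\0&-1\end{pmatrix}u^{s,a'}(m',k,\tau)\Big\ra_{\C^2},
\]
so that everything reduces to a uniform estimate of the $\C^2$-bracket against the smooth weight $\hat{\psi}$, whose support avoids $k=0$ by \eqref{Hinf}.

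Next I would substitute the asymptotic representation \eqref{uasybd}, writing each amplitude as $u=u_0+E$ with $u_0=U(\tau)\,(g_1 e^{i\varphi},\,g_2 e^{-i\varphi})^T$ and $|E(\tau)|\leq c/|\tau|$ by \eqref{Etau}. Since $|u_0|$ is bounded and the $k$-integration runs over a compact set, every contribution containing at least one factor $E$ is pointwise $O(1/|\tau|)$ and integrates to $O(1/|\tau|)$. For the leading term I would use the elementary identity
\[
U_m(\tau)^{T}\begin{pmatrix}1&0\\0&-1\end{pmatrix}U_{m'}(\tau)=\begin{pmatrix}\cos(\alpha_m+\alpha_{m'})&\sin(\alpha_m+\alpha_{m'})\\ \sin(\alpha_m+\alpha_{m'})&-\cos(\alpha_m+\alpha_{m'})\end{pmatrix},
\]
with $\alpha_m=-\tfrac12\arctan(\lambda\tau/m)$ from \eqref{Udef}. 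The diagonal entries carry the slowly varying phases $e^{\mp i(\varphi_m-\varphi_{m'})}$ and decay on their own: from $\arctan(\lambda\tau/m)\to\mp\tfrac{\pi}{2}$ one gets $\alpha_m+\alpha_{m'}\to\pm\tfrac{\pi}{2}$ at the rate $O(1/|\tau|)$, whence $\cos(\alpha_m+\alpha_{m'})=O(1/|\tau|)$.

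It remains to treat the off-diagonal entries $\sin(\alpha_m+\alpha_{m'})$, which tend to $\pm1$ and hence do not decay by themselves; they multiply the rapidly oscillating phases $e^{\mp i(\varphi_m+\varphi_{m'})}$. Here I would use the normalization \eqref{phichoice}, which gives $\varphi_m+\varphi_{m'}=2|\lambda|\tau+\rho(\tau)=2|k|\tau+\rho(\tau)$ with $\rho$ bounded and of bounded $k$-derivative on the support. Consequently $\nabla_k(\varphi_m+\varphi_{m'})=2\tau\,k/|k|+O(1)$ has modulus at least $\tfrac12|\tau|$ for $|\tau|$ large, and a single integration by parts in $k$ produces the missing factor $1/|\tau|$; this step is legitimate precisely because $\hat{\psi}$ is compactly supported away from $k=0$, so that $|k|$, the connection amplitudes $g_i^{s,a}(m,k)$, the angle $\alpha_m$ and the regular phase $\rho$ are all smooth in $k$ with bounded derivatives, and there is no stationary point of $2|k|\tau$. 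The remaining compact range in $\tau$ (say $-\tau_0<\tau<-1$) is handled trivially, since the bracket is bounded there and $1/|\tau|$ is bounded below. Combining the two mechanisms yields the bound $c/|\tau|$.

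I expect the main obstacle to be this non-stationary-phase step: one must verify that the $k$-derivatives of the full amplitude — including the connection coefficients $g_i^{s,a}(m,k)$ relating the asymptotics at $\tau\to0^-$ in \eqref{uasy} to those at $\tau\to-\infty$ in \eqref{uasybd} — are bounded uniformly for $\tau<-1$ and $m,m'\in I$, so that integration by parts gains exactly one power of $1/|\tau|$ without forfeiting it to growing derivatives. The uniformity of the Gr\"onwall constant in Lemma~\ref{lemmaboundary} over the compact parameter set $(\lambda,m)$ bounded away from $\lambda=0$ enters here as well, and is what ultimately makes the final constant $c$ independent of $m,m'$.
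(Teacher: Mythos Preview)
Your proposal is correct and follows essentially the same route as the paper: Plancherel reduction to the $\C^2$-bracket, splitting off the error term $E(\tau)$ from~\eqref{uasybd}, observing that the diagonal contribution vanishes to order $O(1/|\tau|)$ (the paper does this by replacing $U(\tau)$ with its constant limit and computing the conjugated matrix, while you compute $\cos(\alpha_m+\alpha_{m'})$ directly), and then gaining the missing $1/|\tau|$ on the off-diagonal term by a single integration by parts in $k$ against the phase $e^{\mp i(\varphi_m+\varphi_{m'})}$. The only cosmetic difference is that the paper uses the Euler operator $k^j\partial_{k^j}$ instead of $\nabla_k$; your explicit flagging of the smoothness of the connection coefficients $g_i^{s,a}(m,k)$ in $k$ is in fact a point the paper leaves implicit.
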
 
\Proof According to \eqref{superpose}, \eqref{fundamental} and~\eqref{phikdef},
\beq \label{psiform}
\psi_m(\tau,x) = \int_{\R^3} \frac{d^3k}{(2 \pi)^3} \sum_{s=\pm} \sum_{a=1,2} \;
\hat{\psi}(m,k,s,a)\: R(\tau)^{-\frac{3}{2}}
\begin{pmatrix} u^{s,a}_1(m,k,\tau) \:\chi_{k, \pm} \\
u^{s,a}_2(m,k,\tau) \:\chi_{k, \pm} \end{pmatrix}\; e^{i k x} \:,
\eeq
and similarly for~$\tilde{\psi}_{m'}$.
Here the integrand is smooth and compactly supported. Therefore, the spatial integral in~\eqref{decaytau} can be computed with the help of Plancherel's theorem. Moreover, since for fixed~$k$ the spinors~$\chi_{k,s}$
are orthonormal, {\it cf.} \eqref{spinON}, we obtain
\begin{align}
&\int_{\R^3} \Sl \psi_m(t,x) | \tilde{\psi}_{m'}(t,x)\Sr\: d^3x
= \frac{1}{R^3} \int_{\R^3} \frac{d^3k}{(2 \pi)^3} \sum_{s=\pm} \sum_{a,a'=1,2} \notag \\
&\qquad \times \overline{\hat{\psi}(m,k,s,a)} \:\tilde{\psi}(m',k,s,a') \; \Big\la u^{s,a}(m,k,t), \begin{pmatrix} 1 & 0 \\ 0 & -1 \end{pmatrix} u^{s,a'}(m',k,t) \Big\ra_{\C^2} \:. \label{mixcombi}
\end{align}

Using the results of Lemma~\ref{lemmaboundary}, the contribution by $E(\tau)$ has the desired $1/\tau$-decay, see \eqref{Etau}, and it is smooth and compactly supported in~$k$. Therefore, using Plancherel's theorem, it satisfies the inequality~\eqref{decaytau}.

It remains to consider the first summand in~\eqref{uasybd}.
Since the operator~$U(\tau)$ in~\eqref{Udef} has the asymptotics
\[ U(\tau) = \frac{1}{\sqrt{2}}\;
\begin{pmatrix} 1 & s \\ -s & 1 \end{pmatrix} + \O\big(\tau^{-1} \big) \:, \]
it suffices to consider the constant matrix. Moreover, using that
\[ \begin{pmatrix} 1 & s \\ -s & 1 \end{pmatrix}^* 
\begin{pmatrix} 1 & 0 \\ 0 & -1 \end{pmatrix}
\begin{pmatrix} 1 & s \\ -s & 1 \end{pmatrix} =  \begin{pmatrix} 0 & 2s \\ 2s & 0 \end{pmatrix} \]
(where the star denotes transposition and complex conjugation),
in~\eqref{mixcombi} one only gets mixed contributions between the first component of~$g$
and the second one of~$\tilde{g}$ or vice versa.
Since all these terms can be treated in the same way, it suffices to consider one of them.
Therefore, the remaining task consists in showing that the integral
\begin{align*}
&\int_{\R^3} \frac{d^3k}{(2 \pi)^3} \;
\overline{\hat{\psi}(m,k,s,a)} \:\tilde{\psi}(m',k,s,a') \; \overline{g_1}\: \tilde{g}_2
\: e^{-i \varphi(\tau)- i \tilde{\varphi}(\tau)}
\end{align*}
decays like~$1/|\tau|$ for all~$s \in \{\pm\}$ and~$a, a' \in \{1,2\}$.
We point out that the phases~$\varphi(\tau)$ and~$\tilde{\varphi}(\tau)$ come with the same sign. This will play a crucial role below.

According to~\eqref{phichoice}, the function~$\varphi$ depends on~$\lambda = \pm |k|$.
We now integrate by parts as follows. First observe that
\begin{gather}
k^j \frac{\partial}{\partial k^j} \:\lambda = \pm k^j \:\frac{\partial}{\partial k^j} \:|k| = \pm |k| = \lambda \\
k^j \frac{\partial}{\partial k^j} \:\sqrt{\lambda^2 + m^2/\tau^2}
= \lambda\: \frac{\partial}{\partial \lambda} \:\sqrt{\lambda^2 + m^2/\tau^2}
= \frac{\lambda^2}{\sqrt{\lambda^2 + m^2/\tau^2}} \\
k^j \frac{\partial}{\partial k^j} \varphi(\tau) = \int_{-1}^\tau \frac{\lambda^2}{\sqrt{\lambda^2 + m^2/\eta^2}}\:
d\eta =: \rho(k,\tau) \:. \label{rhodef}
\end{gather}
Hence
\begin{align*}
&\int_{\R^3} \frac{d^3k}{(2 \pi)^3} \;
\overline{\hat{\psi}(m,k,s,a)} \:\tilde{\psi}(m',k,s,a')
\: e^{-i \varphi(\tau)- i \tilde{\varphi}(\tau)} \\
&= \int_{\R^3} \frac{d^3k}{(2 \pi)^3} \;
\overline{\hat{\psi}(m,k,s,a)} \:\tilde{\psi}(m',k,s,a')
\;\frac{i}{\rho(k,\tau) + \tilde{\rho}(k,\tau)}\;k^j \frac{\partial}{\partial k^j} 
\: e^{-i \varphi(\tau)- i \tilde{\varphi}(\tau)}.
\end{align*}
Now we integrate the $k$-derivatives by parts.

The resulting contributions are estimated as follows: Expanding the integral in~\eqref{rhodef} for large~$|\tau|$, we obtain
\beq \label{rhoes}
\rho(\tau,k) = |k|\: |\tau| + \O\big(\tau^{-1} \big) \:.
\eeq
Therefore, when the $k$-derivative acts on the factors~$k^j$, $\hat{\psi}(m,k,s,a)$
or~$\chi_{k,s}(x)$, the resulting contributions decay like~$1/\tau$, while staying
smooth and compactly supported in~$k$. Therefore we establish the
desired bound. If the $k$-derivative acts on the factor~$\rho$ (and similarly on~$\tilde{\rho}$), we obtain
\[ k^j \,\frac{\partial}{\partial k^j} \rho(\tau,k) = \lambda\: \frac{\partial}{\partial \lambda} \rho(\tau,k)
= |\tau| + \O\big(\tau^{-1} \big) \:, \]
which together with~\eqref{rhoes} again gives the desired~$1/|\tau|$-decay.

It remains to be shown that the decay~$\sim 1/|\tau|$ is uniform in~$k$.
This follows from the fact that, according to~\eqref{Hinf}, the elements in~$\H^\infty$
have compact support in~$k$ away from ~$k=0$. This concludes the proof.
\QED

Combining Lemma~\ref{lemmainfty} and Lemma~\ref{lemmaboundary},
we immediately obtain the following result:
\begin{Prp} \label{prpdouble}
For any~$\psi, \phi \in \H^\infty$,
\beq \label{double}
\int_{-\infty}^\infty dt \;\bigg| \int_{\R^3} \Sl \p \psi | \p \phi \Sr_{(t,x)}\: R(t)^3\: d^3x \bigg| < \infty \:.
\eeq
\end{Prp}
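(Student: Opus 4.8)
The plan is to estimate the quantity $F(t) := \int_{\R^3} \Sl \p \psi | \p \phi \Sr_{(t,x)}\, R(t)^3\, d^3x$ separately on the three regions $t>1$, $0 \le t \le 1$ and $t<0$ (the last of which corresponds, in conformal time, to $\tau<-1$), and to show that $\int |F(t)|\,dt$ is finite on each. The middle region is compact, while the two outer regions are controlled by Lemma~\ref{lemmainfty} and Lemma~\ref{lemmaboundary_1} respectively, so that the statement follows by adding up the three contributions.

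On the asymptotic region $t>1$ I would first bound the fibre inner product by the Euclidean one: since $\Sl \psi | \phi \Sr = \psi^\dagger \gamma^0 \phi$ with $\gamma^0 = \diag(1,1,-1,-1)$ of operator norm one, we have $|\Sl \p\psi | \p\phi \Sr_{(t,x)}| \le |\p\psi(t,x)|\,|\p\phi(t,x)|$. Applying the Cauchy--Schwarz inequality with respect to the measure $R(t)^3\,d^3x$ then gives
\[
|F(t)| \le \big\| (\p\psi)(t,.) \big\|_{L^2(\R^3)}\,\big\| (\p\phi)(t,.) \big\|_{L^2(\R^3)},
\]
and Lemma~\ref{lemmainfty} bounds each factor by $c/t$. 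Hence $|F(t)| \le C/t^2$, which is integrable on $(1,\infty)$.

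For the boundary region I would pass to conformal time, where $dt = d\tau/|\tau|$, and use the sesquilinearity of the fibre product to write $\Sl \p\psi | \p\phi \Sr = \int_I \int_I \Sl \psi_m | \phi_{m'} \Sr\, dm\, dm'$, so that $F$ becomes a double $m$-integral of the spatial integrals estimated in Lemma~\ref{lemmaboundary_1}. Since that lemma provides the bound $c/|\tau|$ \emph{uniformly} in $m,m'$ and the interval $I$ is bounded, the $m$-integrations produce only the finite factor $|I|^2$, giving $|F(t)| \le C/|\tau|$ for $\tau<-1$. Changing variables, the contribution of this region is
\[
\int_{-\infty}^{0} |F(t)|\, dt = \int_{-\infty}^{-1} |F|\, \frac{d\tau}{|\tau|} \le C \int_{-\infty}^{-1} \frac{d\tau}{\tau^2} < \infty,
\]
the extra factor $1/|\tau|$ from the measure being precisely what turns the $1/|\tau|$-decay of $F$ into an integrable $1/\tau^2$.

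It remains to treat the compact region $0 \le t \le 1$. Here the elements of $\H^\infty$ are, by the support condition~\eqref{Hinf}, Schwartz functions in $x$ for each fixed $t$ and depend smoothly on $t$, so $F(t)$ is finite and continuous, hence bounded on $[0,1]$, contributing a finite amount. Adding the three contributions yields~\eqref{double}. The only genuinely delicate point is the change of variables near the boundary: the bare decay $|F(t)| \sim 1/|\tau|$ is \emph{not} integrable against $dt$, and it is only the Jacobian $dt = d\tau/|\tau|$ that restores integrability --- this is the precise manifestation of the boundary effect emphasised throughout the section.
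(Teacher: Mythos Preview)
Your proof is correct and follows essentially the same route as the paper: split the time axis, control the future region via Lemma~\ref{lemmainfty} together with Cauchy--Schwarz, and the past region via Lemma~\ref{lemmaboundary_1} combined with the Jacobian $dt = d\tau/|\tau|$, using uniformity of the estimates in the mass parameters over the compact interval~$I$. Your closing remark is slightly muddled---since $1/|\tau| = e^{t}$, the bound $|F(t)|\lesssim 1/|\tau|$ \emph{is} directly integrable against~$dt$ on $(-\infty,0)$; what you presumably mean is that $1/|\tau|$ alone is not $d\tau$-integrable near $\tau\to-\infty$, and it is the extra $1/|\tau|$ from the Jacobian that makes the $\tau$-integral converge---but this does not affect the validity of the argument.
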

\Proof Asymptotically as~$t \rightarrow \infty$, it was shown in Lemma~\ref{lemmainfty} 
that $\p \psi$ decays at least like~$1/\tau$.
Therefore, it is square integrable over~$t \in [1,\infty)$.
Near the boundary, on the other hand, it was shown in Lemma~\ref{lemmaboundary}
that the spatial integral of the inner product~$\Sl \psi_m | \psi_{m'} \Sr$
decays like~$1/|\tau|$. In view of the fact that the integration measure~$dt$
transforms to~$d\tau/|\tau|$, this implies that the time integral also
exists in the $L^1$-sense near the boundary.
Keeping in mind that the estimates are locally uniform in the mass parameters
and that~$m,m' \in I$ with~$I$ a compact interval,
we obtain the sought result.
\QED

\begin{Remark} {\em{ We point out that,
in contrast to the assumptions in the strong mass oscillation property,
here the function~$\Sl \p \psi | \p \phi \Sr$ will in general
not be integrable on~$\scrM$. We now explain how this comes about.
Using the asymptotics in~\eqref{uasybd} in~\eqref{psiform}, one finds that
the leading contribution to~$\psi$ near the boundary is of the form
\beq \label{leading}
\phi(t,x) := R(t)^{\frac{3}{2}}\: \psi(t,x) \sim \int_{\R^3} \frac{d^3k}{(2 \pi)^3}\: 
h(k)\: e^{i k x}\:e^{\pm i \,|k|\, t} \:.
\eeq
By direct computation, one sees that~$\phi$ is a solution of the scalar wave equation
in Minkowski space. Therefore, it can
be represented by a Fourier integral of the form
\[ \phi(t,x) = \int_{\R^4} \frac{d^4p}{(2 \pi)^4}\: \hat{\phi}(p)\: \delta\big( p^2 \big)\: e^{-i (p^0 t - p^1 x)}\:. \]
Consequently, using Plancherel's theorem, its spatial $L^2$-norm
\[ \int_{\R^3} |\psi(t,x)|^2\: d^3x = \sum_\pm \int_{\R^3} \frac{d^3k}{(2 \pi)^3}\: \frac{1}{2\,|k|}\:
\big|\hat{\psi}\big( \pm |k|, k \big) \big|^2 \]
is time independent. Therefore, the integral
\[ \int_{-\infty}^0 \frac{d\tau}{|\tau|} \int_{\R^3}d^3x\: |\psi(t,x)|^2 \]
diverges.

One may wonder whether integrating over the mass parameter might improve the
decay properties due to mass oscillations. This is not the case, as the following consideration shows:
Differentiating~\eqref{phichoice} with respect to the mass, one finds that
\[ \int_{-\infty}^{-1} \Big| \frac{\partial \phi(\tau)}{\partial m} \Big| \:d\tau < \infty \:. \]
Therefore, varying the mass only gives rise to a finite phase shift in~\eqref{uasybd},
implying that mass oscillations do not improve the decay properties.

Finally, one may ask whether the fact that the combination~$|\Sl \psi_m | \phi_{m'} \Sr|$
might become integrable due to the fact that the inner product is indefinite.
This is also not the case, as the following argument shows: 
As it becomes apparent in~\eqref{stip3tau}, the leading contribution~\eqref{leading}
does enter the inner product. In~\eqref{stip3tau}, we made use of phase factors to
make sense of the time integral. However, if the absolute value of the inner
product~$\Sl \psi_m | \phi_{m'} \Sr$ is taken, the phase information gets lost and the
time integral necessarily diverges.

These considerations show that the spin scalar product~$\Sl \p \psi | \p \phi\Sr$
is in general {\em{not integrable}}. With the procedure~\eqref{double}
we can make sense of the spacetime integral, but only if we integrate
{\em{first over space}} and {\em{then over}} the {\em{time}} variable.

The critical reader may wonder how our findings fit together with the results in~\cite[Section~6]{infinite}.
Indeed, in~\cite{infinite} it was shown that the function~$\Sl \p \psi | \p \phi \Sr$ is integrable over the whole
de Sitter, which clearly implies that this function is also integrable on any of its open subsets, in particular the cosmological de Sitter spacetime. In order to understand why these results do not contradict each other,
one must keep in mind that the domains~$\H^\infty$ were chosen differently. In particular, in~\cite[Section~6]{infinite} the elements in $\H^\infty$
do not decay rapidly on the surfaces~$t=\text{const}$
in cosmological de Sitter. In the analysis in this paper, however, working with rapidly decaying solutions
(more specifically, with compactly supported solutions in momentum space) is easier
and more natural. The prize we pay is that the function~$\Sl \p \psi | \p \phi \Sr$ need not be
in~$L^1(\scrM)$.
}} \hspace*{1cm} \QEDrem
\end{Remark}

Next we establish a relation similar to~\eqref{mb2}.
Here there are two major differences: First, the spacetime integral must be
defined as in~\eqref{double} by first integrating over space and then over time.
Second, integrating the Dirac operator by parts gives boundary terms
denoted by~${\mathfrak{B}}$.
\begin{Prp} \label{prpT}
For any~$\psi, \tilde{\psi} \in \H^\infty$,
\begin{align*}
&\int_{-\infty}^\infty dt \;\bigg( \int_{\R^3} \big( 
\Sl \p T \psi \,|\, \p \tilde{\psi} \Sr_{(t,x)} - \Sl \p \psi \,|\, \p T \tilde{\psi} \Sr_{(t,x)} \big) \: R(t)^3\: d^3x \bigg) \\
&\;= i \int_I dm \int_I dm'\; {\mathfrak{B}} \big( \psi_m, \tilde{\psi}_{m'} \big) \:,
\end{align*}
where
\[ {\mathfrak{B}}(\psi_m, \tilde{\psi}_{m'}) := \overline{g_1} \tilde{g_1} + \overline{g_2} \tilde{g_2} \,, \]
and $g_{1\!/\!2}$ and~$\tilde{g}_{1\!/\!2}$ are the coefficients in the asymptotic expansion of Lemma~\ref{lemmaboundary} with~$\varphi(\tau)$ according to~\eqref{phichoice}.
\end{Prp}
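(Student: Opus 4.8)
The plan is to reduce the left-hand side to a pure boundary computation by exploiting the Dirac current. The starting point is a pointwise identity that follows from $\Dir\psi_m=m\,\psi_m$ and $\Dir\tilde\psi_{m'}=m'\,\tilde\psi_{m'}$: since the Clifford matrices are self-adjoint with respect to the spin scalar product and the spin connection is metric, a direct computation gives
\[ \nabla_j \Sl \psi_m \,|\, \gamma^j \tilde\psi_{m'}\Sr = i\,(m-m')\,\Sl \psi_m \,|\, \tilde\psi_{m'}\Sr \:. \]
Integrating against the two mass variables and using that $\p$ is integration over $m$ (so it commutes with $\nabla_j$ and with Clifford multiplication), I would rewrite the integrand as a total divergence,
\[ \Sl \p T\psi \,|\, \p\tilde\psi\Sr_{(t,x)} - \Sl \p\psi \,|\, \p T\tilde\psi\Sr_{(t,x)} = -i\,\nabla_j \Sl \p\psi \,|\, \gamma^j \p\tilde\psi\Sr_{(t,x)} \:, \]
which is the whole point: the non-integrable integrand becomes a divergence.

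Next I would perform the spatial integration first, as prescribed in \eqref{double}. Because elements of $\H^\infty$ have compactly supported momentum profiles, $\p\psi(t,\cdot)$ is Schwartz in $x$, so the spatial part of the divergence integrates to zero; using $\sqrt{-g}=R(t)^3$ and the $x$-independence of $R$, only the time component survives:
\[ \int_{\R^3}\Big( \Sl \p T\psi \,|\, \p\tilde\psi\Sr - \Sl \p\psi \,|\, \p T\tilde\psi\Sr\Big)R(t)^3 d^3x = -i\,\frac{d}{dt}G(t),\quad G(t):=\int_{\R^3}\Sl \p\psi \,|\, \gamma^0 \p\tilde\psi\Sr R(t)^3 d^3x \:. \]
The outer $t$-integral is then a plain fundamental theorem of calculus, so the left-hand side equals $-i\big(G(+\infty)-G(-\infty)\big)$. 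This is convenient because $G$ is built directly from $\p\psi$ and $\p\tilde\psi$, and the two boundary limits can be read off from the asymptotic lemmas without ever interchanging the outer $t$-integral with the mass integrals.

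For the future boundary I would invoke Lemma \ref{lemmainfty}: since $\Sl \p\psi \,|\, \gamma^0 \p\tilde\psi\Sr=(\p\psi)^\dagger(\p\tilde\psi)$, Cauchy--Schwarz and the bound $\|(\p\psi)(t,\cdot)\|_{L^2}\le c/t$ give $|G(t)|\le c/t^2$, hence $G(+\infty)=0$; equivalently this is the Riemann--Lebesgue decay of the mass integral against the oscillating asymptotics \eqref{uasy}. For the past boundary I would pass to conformal time and insert the asymptotics \eqref{uasybd} into the Plancherel representation of $G$. Because $U(\tau)$ is unitary and, by \eqref{Udef}, converges as $\tau\to-\infty$ to the \emph{same} constant matrix for both masses, the unitary factors cancel in the inner product, leaving $\overline{g_1}\tilde g_1\,e^{i(\tilde\varphi-\varphi)}+\overline{g_2}\tilde g_2\,e^{-i(\tilde\varphi-\varphi)}$; the choice \eqref{phichoice} makes the phases share the divergent $|\lambda|\tau$ part, so $\tilde\varphi(\tau)-\varphi(\tau)\to 0$ and only $\overline{g_1}\tilde g_1+\overline{g_2}\tilde g_2=\mathfrak{B}$ remains. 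Thus $G(-\infty)=\int_I\!\int_I \mathfrak{B}(\psi_m,\tilde\psi_{m'})\,dm\,dm'$, and collecting signs gives the claimed $i\int_I dm\int_I dm'\,\mathfrak{B}$.

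The main obstacle is rigor rather than the formal manipulation: none of the naive spacetime integrals converge absolutely (this is precisely the content of the Remark), so the identity holds only in the prescribed order, space first and then time. Concretely, I must justify differentiating under the spatial integral and taking the limit $\tau\to-\infty$ inside both the $k$-integral and the mass integrals. I would handle both by dominated convergence, using that the error term $E(\tau)$ in \eqref{Etau} and the $\O(\tau^{-1})$ corrections to $U(\tau)$ decay like $1/|\tau|$ uniformly in $m,m'$ on the compact closure $\overline{I}$ and uniformly in $k$ on the compact support away from $k=0$ guaranteed by \eqref{Hinf}. Controlling these contributions uniformly, so that they drop out in the limit and leave exactly the coefficients $g_{1\!/\!2}$ and $\tilde g_{1\!/\!2}$, is the delicate part of the argument.
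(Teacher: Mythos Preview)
Your proposal is correct and follows essentially the same route as the paper: rewrite the integrand as the divergence of the Dirac current, kill the spatial and future boundary terms via rapid spatial decay and Lemma~\ref{lemmainfty}, and evaluate the past boundary $\tau\to-\infty$ via Plancherel and the asymptotics of Lemma~\ref{lemmaboundary}, where the unitary $U(\tau)$ and the phase choice~\eqref{phichoice} make everything collapse to $\mathfrak{B}$. Your observation that the two $U(\tau)$'s carry different masses and cancel only because their $\tau\to-\infty$ limit is mass-independent is in fact a sharper justification than the paper's appeal to unitarity alone.
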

\Proof Using that~$\psi_m$ and~$\tilde{\psi}_m$ satisfy the Dirac equation, we may replace
the factors~$T$ by the Dirac operator~$\Dir$. As the latter is symmetric with
respect to the inner product~$\bra .|. \ket$, we only need to compute the boundary terms.
Since for every fixed time, the wave functions decay rapidly at spatial infinity, the spatial part
of the Dirac operator does not give rise to boundary terms.
Moreover, the decay properties as~$t \rightarrow +\infty$ as worked out in Lemma~\ref{lemmainfty}
imply that we do not get boundary terms at infinity. Therefore, it remains to compute the boundary terms as~$t \rightarrow -\infty$. On account of the time component of the metric~\eqref{flat_dS_metric}, this is 
\begin{align*}
A \,\,&\!\!:= \int_{-\infty}^\infty dt \:\bigg( \int_{\R^3} \big( 
\Sl \p T \psi \,|\, \p \tilde{\psi} \Sr_{(t,x)} - \Sl \p \psi \,|\, \p T \tilde{\psi} \Sr_{(t,x)} \big) \: R(t)^3\: d^3x \bigg) \\
&= \lim_{t \rightarrow -\infty}
\int_{\R^3} \Sl \p \psi \,|\, \p \,\big(i \gamma^0)\, \tilde{\psi} \Sr_{(t,x)} \: R(t)^3\: d^3x \:.
\end{align*}
As in~\eqref{mixcombi}, the spatial integral can be carried out using Plancherel's theorem. Therefore we obtain
\begin{align}
A &= \lim_{t \rightarrow -\infty} \int_{\R^3} \frac{d^3k}{(2 \pi)^3} \sum_{s=\pm} \sum_{a,a'=1,2} \notag \\
&\qquad\qquad \times \overline{\hat{\psi}(m,k,s,a)} \:\tilde{\psi}(m',k,s,a') \;
i \: \big\la u^{s,a}(m,k,t), u^{s,a'}(m',k,t) \big\ra_{\C^2} \:. \label{Aform}
\end{align}
Using the asymptotics of Lemma~\ref{lemmaboundary}, we only need to take into account
the plane wave asymptotics in~\eqref{uasybd}. Since the operator~$U(\tau)$ is unitary,
it drops out from the scalar product in~\eqref{Aform}.
Moreover, in the limit~$\tau \rightarrow -\infty$, the integral in~\eqref{phichoice} vanishes.
Therefore, we can work with the simple formula~$\phi(\tau) = |\lambda|\, \tau$, implying that
the phases drop out of the scalar product in~\eqref{Aform}. This concludes the proof.
\QED

It is worth noting that the boundary terms are positive in the sense that
\[ 
{\mathfrak{B}}(\psi_m, \psi_m) \geq 0 \:. \]
This can be understood from the fact that the boundary terms tell about the
flux of the electromagnetic current through the null surface~$t=-\infty$.
The Dirac current flux through a null surface always has a definite sign.

\begin{Thm} {\bf{(mass decomposition)}} \label{thmdecomp}
For all~$\psi, \phi \in \H^\infty$,
\begin{align}
\int_{-\infty}^\infty &dt \;\bigg( \int_{\R^3} \Sl \p \psi | \p \phi \Sr_{(t,x)} \: R(t)^3\: d^3x \bigg) \notag \\
&= \int_I (\psi_m \,|\, \Sig_m \,\phi_m)_m\: dm \label{Sigdef} \\
&\quad\; + i \,\lim_{\varepsilon \searrow 0}
\int_I dm \int_I dm'\; \frac{m-m'}{(m-m')^2+\varepsilon^2} \: {\mathfrak{B}}\big(\psi_m, \phi_{m'} \big) \:, \label{Bdef}
\end{align}
where~$\Sig_m$ is the operator which in the separation ansatz~\eqref{ansatzhalf}
acts on the fundamental solutions~$u^{s,a}$ by
\beq \label{Smhalf}
\Sig_m u^{s,1} = \frac{1}{2}\: u^{s,1} \:, \qquad \Sig_m u^{s,2} = -\frac{1}{2}\: u^{s,2} \:.
\eeq
\end{Thm}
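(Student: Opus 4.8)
The plan is to reduce the spacetime integral on the left-hand side to a pair of boundary contributions at the two asymptotic ends $t\to+\infty$ and $t\to-\infty$, and to show that the future end produces the bulk term \eqref{Sigdef} while the past end produces the boundary term \eqref{Bdef}. First I would expand $\p\psi=\int_I\psi_m\,dm$ and $\p\phi=\int_I\phi_{m'}\,dm'$, so that the integrand becomes $\Sl\p\psi|\p\phi\Sr=\int_I\int_I\Sl\psi_m|\phi_{m'}\Sr\,dm\,dm'$. By Proposition~\ref{prpdouble} the function $G(t):=\int_{\R^3}\Sl\p\psi|\p\phi\Sr_{(t,x)}R(t)^3\,d^3x$ is absolutely integrable in $t$, so the left-hand side equals $\lim_{t_0\to-\infty,\,t_1\to+\infty}\int_{t_0}^{t_1}G(t)\,dt$, and on each finite slab Fubini lets me exchange the $t$-integral with the two mass integrals.

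The key device is current (non-)conservation. For fixed masses the Dirac current $\mathscr{J}^j=\Sl\psi_m|\gamma^j\phi_{m'}\Sr$ obeys $\nabla_j\mathscr{J}^j=i(m-m')\Sl\psi_m|\phi_{m'}\Sr$, as follows from $\Dir\psi_m=m\psi_m$ and $\Dir\phi_{m'}=m'\phi_{m'}$ together with the symmetry of the Dirac matrices with respect to the spin scalar product. Integrating this identity over the slab $t_0\le t\le t_1$ and applying the Gauss divergence theorem (the spatial flux vanishes by rapid decay in $x$) gives, for $m\neq m'$,
\[
\int_{t_0}^{t_1}\!\!\int_{\R^3}\Sl\psi_m|\phi_{m'}\Sr R^3\,d^3x\,dt=\frac{C_{m,m'}(t_1)-C_{m,m'}(t_0)}{i\,(m-m')},\qquad C_{m,m'}(t):=\int_{\R^3}\Sl\psi_m|\gamma^0\phi_{m'}\Sr R^3\,d^3x.
\]
On the diagonal $m=m'$ the quantity $C_{m,m}(t)=(2\pi)^{-1}(\psi_m|\phi_m)_m$ is $t$-independent by current conservation, so the numerator vanishes there and the combined integrand is smooth across the diagonal; this legitimates splitting the double mass integral into a $t_1$-piece and a $t_0$-piece, each understood as a principal value in $m-m'$.

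For the future piece I would insert the large-time asymptotics of Lemma~\ref{lemmagronwall} and the normalization \eqref{uasy}: the fundamental solutions become asymptotically orthonormal, with $u^{s,1}\sim(e^{-imt},0)$ and $u^{s,2}\sim(0,e^{imt})$, so after Plancherel (using \eqref{print3h}) only the diagonal $a=a'$ survives and $C_{m,m'}(t_1)$ reduces to a smooth $k$-integral times $e^{\pm i(m-m')t_1}$, the sign being $+$ for $a=1$ and $-$ for $a=2$, up to corrections decaying like $e^{-t_1}$. Invoking the distributional identity $\mathrm{P.V.}\,e^{\pm iMt_1}/M\to\pm i\pi\,\delta(M)$ as $t_1\to+\infty$ collapses $m'=m$ and yields exactly $\int_I(\psi_m|\Sig_m\phi_m)_m\,dm$ with $\Sig_m$ acting as $\pm\tfrac12$ on $u^{s,1/2}$, i.e. \eqref{Smhalf}. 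For the past piece I would use Lemma~\ref{lemmaboundary}: since $U(\tau)$ is unitary and, with the phase choice \eqref{phichoice}, $\varphi(\tau)-\tilde\varphi(\tau)\to0$ as $\tau\to-\infty$ (the common $|\lambda|\tau$ cancels because both solutions share the same $k$), the current $C_{m,m'}(t_0)$ converges to $\overline{g_1}\tilde g_1+\overline{g_2}\tilde g_2$ integrated against $\overline{\hat\psi}\,\hat\phi$ over $k$, which is precisely $\mathfrak{B}(\psi_m,\phi_{m'})$. The surviving $-C_{m,m'}(t_0)/(i(m-m'))$ then gives $i\,\mathrm{P.V.}\!\int\!\!\int\mathfrak{B}/(m-m')$, which is the $\varepsilon\searrow0$ limit in \eqref{Bdef} since $(m-m')/((m-m')^2+\varepsilon^2)\to\mathrm{P.V.}\,1/(m-m')$.

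I expect the future limit to be the main obstacle. The delicate point is to justify the distributional convergence $\mathrm{P.V.}\,e^{\pm iMt_1}/M\to\pm i\pi\delta(M)$ tested against the smooth, compactly supported mass profiles $\overline{\hat\psi}(m,\cdot)\hat\phi(m',\cdot)$, and---crucially---to control the error terms from the Gr\"onwall estimates uniformly in $k$ and $m$ so that they drop out in the limit; here compact support in $k$ away from $k=0$, guaranteed by \eqref{Hinf}, is what makes the bounds uniform. A secondary technical care is the principal-value bookkeeping: one must check that the two separately singular diagonal contributions of the $t_1$- and $t_0$-pieces are each well defined as principal values, consistently with the removable singularity of the combined finite-slab integrand, so that the split into \eqref{Sigdef} and \eqref{Bdef} is unambiguous.
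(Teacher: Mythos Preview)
Your proposal is essentially correct and follows the same architecture as the paper's proof: both rest on the divergence identity $\nabla_j\Sl\psi_m|\gamma^j\phi_{m'}\Sr=i(m-m')\Sl\psi_m|\phi_{m'}\Sr$ (this is exactly Proposition~\ref{prpT}), both read off the past boundary contribution~$\mathfrak{B}$ from Lemma~\ref{lemmaboundary} with the phase choice~\eqref{phichoice}, and both extract the~$\Sig_m$ term from the plane-wave asymptotics~\eqref{uasy} at~$t\to+\infty$.

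The only substantive difference is in how the future end is regularized. You use a sharp cutoff at~$t_1$ and invoke the distributional limit $\mathrm{P.V.}\,e^{\pm iMt_1}/M\to\pm i\pi\,\delta(M)$, whereas the paper inserts an exponential damping~$\eta_\varepsilon(t)=\chi_{(-\infty,L]}+e^{-\varepsilon t}\chi_{(L,\infty)}$, carries out the~$t$-integral explicitly to obtain~$1/(\mp i(m-m')+\varepsilon)$, and then applies the Sokhotski--Plemelj formula. The paper's route is slightly cleaner on the point you flagged as delicate: the damping produces the~$\delta$ and the principal part simultaneously from a single analytic expression, so one never has to split the finite-slab integral into two separately singular principal-value pieces and argue their individual well-posedness. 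Conversely, your formulation makes the geometric origin of both terms (future and past boundary currents) more transparent. Either regularization works; your anticipation of where the care is needed is accurate.
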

\Proof As worked out in Lemma~\ref{lemmainfty}, the $t$-integral for large times
exists only as a consequence of the mass oscillations. In order to compute the integrals,
it is convenient to insert a convergence-generating function~$\eta_\varepsilon(t)$ defined
for given~$L \in \R$ by
\[ \eta_\varepsilon(t) = \chi_{(-\infty, L]}(t) + e^{-\varepsilon t}\: \chi_{(L, \infty)}(t) \:. \]
Then, using Lebesgue dominated convergence theorem,
\begin{align}
A(\psi,\phi) \,\,&\!\!:= \int_{-\infty}^\infty dt \;\bigg( \int_{\R^3} \Sl \p \psi | \p \phi \Sr_{(t,x)} \: R(t)^3\: d^3x \bigg) \notag \\
&= \lim_{\varepsilon \searrow 0} \int_{-\infty}^\infty \eta_\varepsilon(t) \:
dt \;\bigg( \int_{\R^3} \Sl \p \psi | \p \phi \Sr_{(t,x)} \: R(t)^3\: d^3x \bigg) \notag \\
&= \lim_{\varepsilon \searrow 0} \int_I dm \int_I dm' \int_{-\infty}^\infty \eta_\varepsilon(t) \:
dt \;\bigg( \int_{\R^3} \Sl \psi_m \,|\, \phi_{m'} \Sr_{(t,x)} \: R(t)^3\: d^3x \bigg) \:. \label{totint}
\end{align}

First we compute the contributions for~$m \neq m'$. To this end, we assume that~$\psi$
and~$\phi$ are supported in disjoint subsets of~$I$. Then Proposition~\ref{prpT} implies that
\begin{align*}
&\lim_{\varepsilon \searrow 0} \int_I dm \int_I dm' \int_{-\infty}^\infty \eta_\varepsilon(t) \:
dt \;\bigg( \int_{\R^3} \:(m-m')\: \Sl \psi_m \,|\, \phi_{m'} \Sr_{(t,x)} \: R(t)^3\: d^3x \bigg) \\
&= i \int_I dm \int_I dm'\; {\mathfrak{B}} \big( \psi_m, \phi_{m'} \big).
\end{align*}
Since~$\psi$ and~$\phi$ may be multiplied by arbitrary test functions in~$m$
(see Definition~\ref{defHinf}~(i)), it follows that
\[ A(\psi,\phi) = i \int_I dm \int_I dm'\; \frac{{\mathfrak{B}} \big( \psi_m, \phi_{m'} \big)}{m-m'} \:. \]
We have thus derived the contribution for~$m \neq m'$ in~\eqref{Bdef}.

It remains to compute the singular contribution at~$m=m'$.
According to Lemma~\ref{lemmaboundary}, the spacetime integral in~\eqref{totint}
exists for~$t<L$ pointwise in~$m$ and~$m'$ and is uniformly bounded in~$m$ and~$m'$.
Therefore, it suffices to consider the integrals
\begin{align*}
\int_I dm \int_I dm' \int_L^\infty e^{-\varepsilon t} \:
dt \;\bigg( \int_{\R^3} \Sl \psi_m \,|\, \phi_{m'} \Sr_{(t,x)} \: R(t)^3\: d^3x \bigg)
\end{align*}
in the limit~$\varepsilon \searrow 0$. Using the asymptotics of Lemma~\ref{lemmagronwall},
the contribution by the error terms in~\eqref{eq:3z} can be made arbitrarily small by increasing~$L$.
Therefore, it suffices to consider the plane waves in~\eqref{eq:3z}.
Employing the ansatz~\eqref{superpose} and~\eqref{fundamental},
we can again carry out the spatial integral with Plancherel's theorem. We obtain the integral
\begin{align*}
&\int_I dm \int_I dm' \int_L^\infty e^{-\varepsilon t} \:dt
\int_{\R^3} \frac{d^3k}{(2 \pi)^3} \sum_{s=\pm} \sum_{a,a'=1,2} \;
\overline{\hat{\psi}(m,k,s,a)}\: \hat{\phi}(m',k,s,a') \\
&\qquad \times \Sl \begin{pmatrix} e^{-imt} \:f^\infty_1(m,k,s,a) \\[0.3em]
e^{imt} \:f^\infty_2(m,k,s,a) \end{pmatrix} \:|\: \begin{pmatrix} e^{-im't} \:\tilde{f}^\infty_1(m',k,s,a') \\[0.3em]
e^{im't} \:f^\infty_2(m',k,s,a') \end{pmatrix} \Sr \\
&=\int_I dm \int_I dm'
\int_{\R^3} \frac{d^3k}{(2 \pi)^3} \sum_{s=\pm} \sum_{a,a'=1,2} \;
\overline{\hat{\psi}(m,k,s,a)}\: \hat{\phi}(m',k,s,a') \\
&\qquad \times \Big( \frac{\delta^a_1\: \delta^{a'}_1}{-i (m-m') + \varepsilon}\:
e^{i(m-m')L - \varepsilon L}
- \frac{\delta^a_2\: \delta^{a'}_2}{i (m-m') + \varepsilon} \Big) \: e^{-i(m-m')L - \varepsilon L}\:,
\end{align*}
where in the last step we carried out the $t$-integration and
used the asymptotics of the fundamental solutions~\eqref{uasy}.
Taking the limit~$\varepsilon \searrow 0$ with the help of the distributional relation
\[ \lim_{\varepsilon \searrow 0} \frac{1}{x \pm i \varepsilon} =
\mp i \pi \: \delta(x) + \frac{\text{PP}}{x} \]
(where~$\text{PP}$ denotes the principal part),
we find that the singular contribution at~$m=m'$ consists of a $\delta$-distribution
and a principal part. The contribution by the former to~$A(\psi,\phi)$ is
\begin{align*}
& \int_I dm \int_I dm'
\int_{\R^3} \frac{d^3k}{(2 \pi)^3} \sum_{s=\pm} \sum_{a,a'=1,2} \;
\overline{\hat{\psi}(m,k,s,a)}\: \hat{\phi}(m',k,s,a') \\
&\qquad \times \pi \: \delta(m-m')\: \big( \delta^a_1\: \delta^{a'}_1 - 
\delta^a_2\: \delta^{a'}_2 \big) \\ 
&= \pi \int_I dm
\int_{\R^3} \frac{d^3k}{(2 \pi)^3} \sum_{s=\pm} \sum_{a=1,2} \;
\overline{\hat{\psi}(m,k,s,a)}\: \hat{\phi}(m,k,s,a) \; \epsilon(a) \:.
\end{align*}
Comparing this formula with the expression for the scalar product~\eqref{print3h}
evaluated asymptotically as~$t \rightarrow \infty$, one obtains~\eqref{Sigdef} with~$\Sig_m$ given by~\eqref{Smhalf}. This concludes the proof.
\QED

We remark that a similar connection between boundary terms and double
mass integrals involving a principal value has already been discovered in the analysis of the fermionic signature operator in the exterior region of Schwarzschild spacetime~\cite{FR}.

\subsection{The Fermionic Projector State}

Starting from Theorem \ref{thmdecomp} and from the operator $\Sig_m$ identified in \eqref{Sigdef}, we can construct first of all the fermionic projector $P$ by \eqref{Pdef}, that is $P=-\chi_{(-\infty, 0)}(\Sig_m)k_m$. Hence following Theorem \ref{Thm:2-pt} and Theorem \ref{thmstate}, we have identified a quasi-free state for the algebra of smeared fields whose two-point function reads
\begin{equation}\label{Eq:2-pt-half}
\omega(\Psi(h)\Psi^*(f))=-\mathcal{P}(h\otimes f)=-\langle h|Pf\rangle \:.
\end{equation}

We observe two important features of the state associated to $\omega$, which will help unveiling its physical significance:
\begin{enumerate}[leftmargin=2.5em]
\item Since $\omega$ is constructed out of the causal fundamental solution $k_m$ and out of the fermionic signature operator $\Sig_m$ defined in \eqref{Smdef}, the two-point correlation function is invariant under the action of all background isometries. For more details refer to \cite{Finster:2017wco} and to Lemma \ref{Lem:isometry_invariance}. In the case of \eqref{flat_dS_metric}, it corresponds to the three-dimensional Euclidean group $E(3)$, which encodes the information that each Cauchy surface at constant $t$ is isometric to the three dimensional Euclidean space. In other words, following Lemma \ref{Lem:isometry_invariance}, for every $\alpha:\scrM\to\scrM$ identifying an element of the isometry group Iso$(\scrM,g)$ of $(\scrM,g)$, it turns out that
	$$\alpha^*(\mathcal{P})(h\otimes f)=\mathcal{P}(h\otimes f).$$
\item From \eqref{Smhalf} and \eqref{uasy} we can infer that, in the limit $t\to\infty$, the projection on the negative spectral subspace in the definition of $\mathcal{P}$ entails that only the component proportional to $u_1^{s,a}$ plays a role in the two-point correlation function.
\end{enumerate}

One might wonder whether the fermionic projector state coincides with the restriction to the cosmological de Sitter spacetime of the spinorial Bunch-Davies state in the sense of Section \ref{Sec:Hadamard_states}. Since we are considering only an open subset of full de Sitter spacetime, we cannot apply directly the results of Allen \& L\"utken \cite{Allen} and we need to rely on a different procedure aimed at constructing maximally symmetric states in a cosmological spacetime with flat spatial sections, \cite{Dappiaggi:2010gt}. Without entering into the details of this construction, which would bring us far from the scopes of this paper, we remark that 
the procedure of \cite{Dappiaggi:2010gt} is based on the observation that the two-point function of the spinorial Bunch-Davies state selects only negative frequencies on the conformal boundary $\mathscr{I}^-$ corresponding to $\tau\to -\infty$. Observe that, on $\mathscr{I}^-\simeq\mathbb{R}\times\mathbb{S}^2$, one can define coherently a notion of frequency with respect to the rigid translations along the $\mathbb{R}$-direction. The associated generator is nothing but the push-forward to $\mathscr{I}^-$ of $\partial_\tau$, $\tau$ corresponding to the conformal time. 

As a consequence of this observation, it follows that the state built via the two-point function \eqref{Eq:2-pt-half} constructed out of the fermionic signature operator \eqref{Sigdef} cannot coincide with that of the spinorial Bunch-Davies state. More precisely, combining \eqref{Smhalf} with the fact that \eqref{Eq:2-pt-half} is a bi-solution of the Dirac equation, it turns out that each negative frequency mode at $\tau\to-\infty$ must evolve via \eqref{Diractau}. Such equation entails that, at $\tau\to 0$, also positive frequencies appear in the mode decomposition, in contradiction to the analysis of \cite{Dappiaggi:2010gt}.

\section{De Sitter in Closed Slicing}\label{Sec:closed_dS}
\subsection{Preliminaries to De Sitter}
In this section we consider de Sitter spacetime in the so-called closed slicing, {\it cf.} \cite{Moschella:2006pkh}. We also recall a few results from~\cite[Section~6]{infinite} using the notation of this paper. The underlying background is $\scrM=\R \times \mathbb{S}^3$ with  line element
\[ 
ds^2 = dT^2 - R(T)^2\: ds^2_{ \mathbb{S}^3} \qquad \text{and} \qquad R(T) = \cosh T \:, \]
where~$ds^2_{ \mathbb{S}^3}$ is the line element of the three-dimensional unit sphere.
This is a special case of a Friedmann-Robertson-Walker metric with closed spatial sections. The Dirac operator was computed in~\cite{moritz} to be
\[ 
\Dir = i \gamma^0 \left( \partial_T + \frac {3 \dot{R}(T)}{2R(T)} \right) + \frac{1}{R(T)}
\begin{pmatrix} 0 & \Dir_{ \mathbb{S}^3} \\ -\Dir_{ \mathbb{S}^3} & 0 \\ \end{pmatrix} , 
\]
where~$\Dir_{ \mathbb{S}^3}$ is the Dirac operator on~$ \mathbb{S}^3$. The space time inner product \eqref{stip}
and the scalar product \eqref{print} take the form
\begin{align}
\bra \psi | \phi \ket &= \int_{-\infty}^\infty dT \int_{ \mathbb{S}^3} \Sl \psi | \phi \Sr_{(T,x)} \:R(T)^3\, d\mu_{ \mathbb{S}^3}(x) \label{stip2} \\
( \psi_m | \phi_m )_m &= 2 \pi \int_{ \mathbb{S}^3} \Sl \psi | \gamma^0 \phi \Sr_{(T,x)} \:R(T)^3\,  d\mu_{ \mathbb{S}^3}(x) \:, \label{print2}
\end{align}
where~$\Sl \psi | \phi \Sr = \psi^\dagger \gamma^0 \phi$ and~$\gamma^0=\diag(1,1,-1,-1)$, while $d\mu_{ \mathbb{S}^3}$ is the normalized volume measure on~$ \mathbb{S}^3$.

The Dirac equation can be separated with the ansatz
\[ 
\psi_m = R(T)^{-\frac{3}{2}}
\begin{pmatrix} u_1(m,T) \:\phi^{(\lambda)}(x) \\
u_2(m,T) \:\phi^{(\lambda)}(x) \end{pmatrix} \:, \]
where~$\phi^{(\lambda)}(x)$ is a normalized eigenspinor~$\phi^{(\lambda)}$ of~$\Dir_{ \mathbb{S}^3}$ corresponding
to the eigenvalue~$\lambda \in \{ \pm \frac{3}{2}, \,\pm \frac{5}{2}, \,\pm \frac{7}{2}, \ldots \}$.
The resulting ODE in time takes the form
\beq \label{DiracODE}
i\, \frac{d}{dT} \begin{pmatrix} u_1 \\ u_2 \end{pmatrix}
=  \begin{pmatrix} m & -\lambda/R \\ -\lambda/R & - m \end{pmatrix}
\begin{pmatrix} u_1 \\ u_2 \end{pmatrix}
\eeq
for the complex-valued functions~$u_1$ and~$u_2$.
Using asymptotic estimates for the solution of this ODE, in~\cite[Section~6]{infinite} it is shown
that the Dirac operator has the strong mass oscillation property, as we now recall.
We decompose the solution space into spatial modes,
\[ \H_m = \bigoplus_{\lambda \in \sigma(\Dir_{ \mathbb{S}^3})} \H_m^{(\lambda)}\:,\qquad
\H = \bigoplus_{\lambda \in \sigma(\Dir_{ \mathbb{S}^3})} \H^{(\lambda)} \:, \]
and we introduce the domain~$\H^\infty$ as the solutions composed only of a finite
number of modes,
\beq \label{Hinfchoice2}
\H^\infty = \Big\{ \psi \in \Cisco(\scrM \times S, S\scrM) \cap \H \;\Big|\;
\psi \in \bigoplus\nolimits_{|\lambda| \leq \Lambda} \H^{(\lambda)} \text{ with } \Lambda \in \R \Big\} \:.
\eeq
\begin{Thm} \label{thmdeSitter} On any interval~$I=(m_L, m_R)$ with~$m_L, m_R>0$,
the Dirac operator in de Sitter spacetime enjoys the strong mass oscillation property with domain~\eqref{Hinfchoice2}.
\end{Thm}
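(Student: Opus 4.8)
The plan is to verify the two conditions~\eqref{mb1} and~\eqref{mb2} of Theorem~\ref{thmsMOP}\,(ii), which together are equivalent to the strong mass oscillation property. The entire argument rests on one structural observation: in the closed slicing $R(T)=\cosh T$ grows exponentially as $T\to\pm\infty$, so the off-diagonal coupling $\lambda/R(T)=\lambda/\cosh T$ in the ODE~\eqref{DiracODE} decays exponentially and is integrable over all of $\R$. This is in sharp contrast with the cosmological slicing, where in~\eqref{Diractau} the coupling $\lambda$ stays constant at the conformal boundary and thereby forced the boundary terms $\mathfrak{B}$ of Proposition~\ref{prpT} to survive.

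First I would determine the asymptotics of the solutions of~\eqref{DiracODE} as $T\to\pm\infty$. Substituting the ansatz $u(T)=\begin{pmatrix} e^{-imT}\,f_1(T) \\ e^{imT}\,f_2(T)\end{pmatrix}$ into~\eqref{DiracODE} removes the diagonal terms and, exactly as in Lemma~\ref{lemmagronwall} and~\cite[Lemma~6.3]{infinite}, yields the differential inequality $\|df/dT\|\le (|\lambda|/\cosh T)\,\|f\|$. Since $\int_{\R}|\lambda|/\cosh T\, dT<\infty$, a Gr\"onwall estimate shows that $f$ converges to limits $f^{\pm\infty}$ as $T\to\pm\infty$, with exponentially small error terms at \emph{both} ends. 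Hence every solution is, to leading order, a genuine plane wave $e^{\pm imT}$ carrying a mass-dependent phase at both temporal infinities.

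Next I would establish the decay of $\p\psi$. Because the phases $e^{\pm imT}$ depend on $m$, integrating the plane-wave part over $m\in I$ against the smooth, compactly supported amplitude $\hat\psi$ and integrating by parts once in $m$ (as in Lemma~\ref{lemmainfty} and~\cite[Lemma~6.4]{infinite}) produces a factor $1/T$, the exponentially small error terms being harmless. As the domain~\eqref{Hinfchoice2} contains only finitely many spatial modes, one obtains $\|(\p\psi)(T,\cdot)\|_{L^2(\mathbb{S}^3)}\le c/|T|$ for $|T|$ large, now at both ends $T\to\pm\infty$. From this two-sided $1/|T|$-decay, the pointwise product $\Sl\p\psi|\p\phi\Sr$ integrated over $\mathbb{S}^3$ (with the $R^3$-weight compensated by the $R^{-\frac{3}{2}}$ factors of the ansatz) decays like $1/T^2$, which is integrable. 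Consequently $\Sl\p\psi|\p\phi\Sr\in L^1(\scrM)$ genuinely on all of $\scrM$, the inner product $\bra\p\psi|\p\phi\ket$ is finite, and a Cauchy--Schwarz estimate in the mode-and-mass decomposition gives the boundedness~\eqref{mb1}.

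Finally, for the symmetry relation~\eqref{mb2} I would use the Dirac equation to replace the mass-multiplication operator by $\Dir$ and integrate by parts in $T$. Since $\Dir$ is symmetric with respect to $\bra\cdot|\cdot\ket$ and the wave functions decay rapidly on each slice $T=\text{const}$, the only possible contributions are boundary terms at $T\to\pm\infty$, of the form $\lim_{T\to\pm\infty}\int_{\mathbb{S}^3}\Sl\p\psi\,|\,i\gamma^0\,\p\tilde\psi\Sr\,R^3\,d\mu_{\mathbb{S}^3}$. The crucial point, and the main obstacle, is to show that these vanish: in contrast to the cosmological case of Proposition~\ref{prpT}, the leading asymptotics are plane waves with a mass-dependent phase, so the integration over $m$ produces the $1/|T|$-decay of the previous step and each boundary quantity behaves like $1/T^2$, hence tends to zero. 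This is precisely the reason that no analogue of the boundary term $\mathfrak{B}$ appears. With~\eqref{mb1} and~\eqref{mb2} both in hand, Theorem~\ref{thmsMOP} yields the strong mass oscillation property on the domain~\eqref{Hinfchoice2}, for every interval $I=(m_L,m_R)$ with $m_L,m_R>0$.
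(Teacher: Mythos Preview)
The paper does not give its own proof of this theorem; it is quoted from~\cite[Section~6]{infinite}, and the key ingredients are then recalled as Lemma~\ref{lemma31}, Lemma~\ref{lemmamode} and Corollary~\ref{fermionic_signature}. Your sketch reconstructs precisely that argument: the Gr\"onwall estimate for two-sided plane-wave asymptotics is~\cite[Lemma~6.2]{infinite} (Lemma~\ref{lemma31} here), the $1/|T|$-decay via integration by parts in~$m$ is~\cite[Lemma~6.4]{infinite}, and the vanishing of boundary terms at both ends is exactly the mechanism behind~\cite[Lemma~6.6]{infinite} (Lemma~\ref{lemmamode} here).

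One small point of difference: the cited proof does not verify~\eqref{mb1} and~\eqref{mb2} separately, but instead computes~$\bra\p\psi|\p\tilde\psi\ket$ explicitly (Lemma~\ref{lemmamode}) and reads off condition~(iii) of Theorem~\ref{thmsMOP} with~$\|\Sig_m\|\le 1$, from which the strong mass oscillation property follows by Cauchy--Schwarz. Your route through~(ii) is equivalent, but your justification of~\eqref{mb1} is slightly loose as stated: the constant in the $1/|T|$-bound from integration by parts in~$m$ depends on~$\partial_m\hat\psi$ and on the mode cutoff~$\Lambda$, not only on~$\|\psi\|$, so it does not directly yield a \emph{uniform} constant~$c$ over all of~$\H^\infty$. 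This gap is closed precisely by the explicit formula~\eqref{modeform}, which, combined with~$\|f^\pm\|_{\C^2}^2=(2\pi)^{-1}\|\psi_m\|_m^2$ (current conservation in~\eqref{print3}), gives~$|\bra\p\psi|\p\tilde\psi\ket|\le \int_I\|\psi_m\|_m\|\tilde\psi_m\|_m\,dm$ mode by mode and hence the strong mass oscillation property with~$c=1$.
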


For a single spatial mode, the inner products~\eqref{stip2} and~\eqref{print2} become
\begin{align}
\bra \psi | \tilde{\psi} \ket &= \int_{-\infty}^\infty \left(\overline{u_1} \tilde{u}_1 - \overline{u_2} \tilde{u}_2 \right) dT  \label{stip3} \\
( \psi_m \,|\, \tilde{\psi}_m )_m &= 2 \pi \left( \overline{u_1} \tilde{u}_1 + \overline{u_2} \tilde{u}_2 \right) 
= 2 \pi \la u, \tilde{u} \ra_{\C^2} \:. \label{print3}
\end{align}

\subsection{The Fermionic Signature Operator}
In~\cite[Section~6]{infinite} the fermionic signature operator was
derived and computed. We now recall a few results of this analysis which will be of relevance here.
First, in~\cite[Lemma~6.2]{infinite} the asymptotics of the solutions of the ODE~\eqref{DiracODE}
is determined. The result is very similar to the asymptotics in Lemma~\ref{lemmagronwall},
but now the plane wave asymptotics is obtained both in the future and in the past:
\begin{Lemma} \label{lemma31}
Asymptotically as $T \rightarrow \pm \infty$, every solution of~\eqref{DiracODE} is of the form
\beq u(T) = \begin{pmatrix} e^{-imT} \:f^\pm_1 \\[0.3em]
e^{imT} \:f^\pm_2 \end{pmatrix} + E^\pm(T) \label{eq:3z}
\eeq
with the error term bounded by
\[ 
\|E^\pm(T)\| \leq \|f^\pm\| \, \exp \big( 2 \,|\lambda| \,e^{\mp T} \big) \:, 
\]
hence entailing exponential decay of $E^\pm(T)$ as~$T \rightarrow \pm \infty$).
\end{Lemma}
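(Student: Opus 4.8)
The plan is to mirror the proof of Lemma~\ref{lemmagronwall} almost verbatim, the only new feature being that the off-diagonal coupling $\lambda/R(T) = \lambda/\cosh T$ now decays \emph{at both ends} $T \to \pm\infty$ (whereas in the flat slicing the coupling $\lambda\, e^{-t}$ decayed only as $t \to +\infty$). This two-sided decay is precisely what produces plane-wave asymptotics in the future as well as in the past, accounting for the presence of $E^\pm$ and $f^\pm$ in the statement.

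First I would remove the diagonal part of~\eqref{DiracODE} by substituting the ansatz
\beq
u(T) = \begin{pmatrix} e^{-imT}\, f_1(T) \\ e^{imT}\, f_2(T) \end{pmatrix}, \qquad
f = \begin{pmatrix} f_1 \\ f_2 \end{pmatrix},
\eeq
which turns~\eqref{DiracODE} into
\beq
\frac{df}{dT} = -\frac{\lambda}{\cosh T}
\begin{pmatrix} 0 & e^{2imT} \\ e^{-2imT} & 0 \end{pmatrix} f \:.
\eeq
Taking norms gives the differential inequality $\|df/dT\| \leq (|\lambda|/\cosh T)\,\|f\|$, and Kato's inequality then yields $\frac{d}{dT}\log\|f\| \leq |\lambda|/\cosh T$ for any nontrivial solution.

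The key analytic input is that $1/\cosh$ is integrable near both $+\infty$ and $-\infty$. More precisely, from $\cosh\tau \geq \tfrac{1}{2} e^{|\tau|}$ one gets $\int_T^\infty (\cosh\tau)^{-1}\,d\tau \leq 2 e^{-T}$ and, by the evenness of $\cosh$, $\int_{-\infty}^T (\cosh\tau)^{-1}\,d\tau \leq 2 e^{T}$. Hence $\log\|f\|$ has bounded variation at each end, so $f$ converges to limits $f^\pm := \lim_{T \to \pm\infty} f(T)$. Integrating the inequality for $\log\|f\|$ from $T$ to $\infty$ (respectively from $-\infty$ to $T$) and exponentiating, exactly as in Lemma~\ref{lemmagronwall}, gives
\beq
\|f(T) - f^+\| \leq \|f^+\|\, \exp\!\Big( \int_T^\infty \frac{|\lambda|}{\cosh\tau}\, d\tau \Big)
\leq \|f^+\|\, \exp\big( 2|\lambda|\, e^{-T} \big),
\eeq
and symmetrically $\|f(T) - f^-\| \leq \|f^-\|\, \exp\big(2|\lambda|\, e^{T}\big)$ as $T \to -\infty$. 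Since $E^\pm(T)$ is the image of $f(T) - f^\pm$ under the unitary diagonal phase factor $\diag(e^{-imT}, e^{imT})$, it obeys the same bound, yielding $\|E^\pm(T)\| \leq \|f^\pm\|\,\exp(2|\lambda|\,e^{\mp T})$ as claimed.

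I do not anticipate any serious obstacle: the argument is a direct adaptation of Lemma~\ref{lemmagronwall}. The only point requiring a moment's care is the explicit constant in the exponent — tracking that $1/\cosh\tau \leq 2 e^{-|\tau|}$ is what upgrades the $|\lambda| e^{-t}$ of the flat case to the $2|\lambda| e^{\mp T}$ appearing in the statement — together with the bookkeeping of the two signs, which is handled uniformly by the evenness of $\cosh T$.
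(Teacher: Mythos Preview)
Your proposal is correct and follows exactly the route the paper indicates: the paper does not give an independent proof of this lemma but cites~\cite[Lemma~6.2]{infinite} and explicitly remarks that the result ``is very similar to the asymptotics in Lemma~\ref{lemmagronwall}, but now the plane wave asymptotics is obtained both in the future and in the past.'' Your adaptation of the Gr\"onwall argument from Lemma~\ref{lemmagronwall}, exploiting the two-sided integrability of $|\lambda|/\cosh T$ via $1/\cosh\tau \leq 2e^{-|\tau|}$, is precisely the intended argument and accounts correctly for the factor~$2$ in the exponent.
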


\noindent In addition \cite[Lemma~6.6]{infinite} entails the following:
\begin{Lemma} \label{lemmamode}
For any single modes~$\psi, \tilde{\psi} \in \H^{(\lambda)}$ with the same spatial dependence,
\beq \label{modeform}
\bra \p \psi | \p \tilde{\psi} \ket = \pi \sum_{s = \pm} \int_I \left( \overline{f^s_1(m)} \tilde{f}^s_1(m)
- \overline{f^s_2(m)} \tilde{f}^s_2(m) \right) dm \:.
\eeq
\end{Lemma}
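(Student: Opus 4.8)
The plan is to reduce the statement to an explicit double-mass integral and then to evaluate it by the regularization technique used in the proof of Theorem~\ref{thmdecomp}, exploiting the fact that Lemma~\ref{lemma31} now furnishes plane-wave asymptotics at \emph{both} ends $T \to \pm\infty$. First I would insert the separation ansatz into the single-mode inner product~\eqref{stip3}. Since $\psi$ and $\tilde\psi$ share the same normalized eigenspinor $\phi^{(\lambda)}$, the integration over $\mathbb{S}^3$ is trivial and the factors $R(T)^{\pm 3/2}$ cancel, so that, writing $U_a(T) := \int_I u_a(m,T)\,dm$, one obtains
\[
\bra \p\psi | \p\tilde\psi \ket = \int_{-\infty}^\infty \big( \overline{U_1(T)}\,\tilde U_1(T) - \overline{U_2(T)}\,\tilde U_2(T) \big)\,dT \:.
\]
Expanding $U_a$ and $\tilde U_a$ back into their mass integrals turns this into a $T$-integral of a double-mass integral; the interchange of integrations is not absolutely convergent, so, exactly as in Theorem~\ref{thmdecomp}, I would insert a convergence-generating factor, but now symmetric in $T$ in order to treat the two asymptotic regions on the same footing.

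Next I would separate the off-diagonal contribution $m \neq m'$ from the singular contribution $m = m'$. For the off-diagonal part I would proceed as in the computation of Proposition~\ref{prpT}, using the Dirac equation to trade each factor of $T$ for the Dirac operator and integrating by parts in $T$. The decisive difference from the cosmological case is that here the boundary contributions vanish at \emph{both} ends. Indeed, by the Riemann--Lebesgue lemma applied to the plane-wave asymptotics of Lemma~\ref{lemma31}, the mass-averaged quantities $U_a(T)$ tend to zero as $T \to \pm\infty$, so no boundary term of the type $\mathfrak{B}$ survives. This is precisely the analytic content of the strong mass oscillation property established in Theorem~\ref{thmdeSitter}, and it forces the entire $m \neq m'$ contribution to vanish.

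For the singular part $m = m'$ I would substitute the plane-wave asymptotics of Lemma~\ref{lemma31} in each tail $|T| > L$, noting that the error terms $E^\pm$ decay exponentially and hence contribute nothing as $L \to \infty$. Carrying out the half-line $T$-integrals of $e^{\pm i(m-m')T}$ against the regulator and passing to the limit with the distributional identity $\lim_{\varepsilon \searrow 0}(x \pm i\varepsilon)^{-1} = \mathrm{PP}(1/x) \mp i\pi\,\delta(x)$ produces a factor $\pi\,\delta(m-m')$ from each end, the principal-value pieces being absorbed into the off-diagonal structure that was already shown to vanish. The future end $T \to +\infty$ then yields $\pi \int_I \big( \overline{f^+_1}\,\tilde f^+_1 - \overline{f^+_2}\,\tilde f^+_2 \big)\,dm$ and the past end $T \to -\infty$ the same expression with $f^-$; their sum is exactly the right-hand side of~\eqref{modeform}, the index $s = \pm$ labelling the two asymptotic regions.

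I expect the main obstacle to be the rigorous control of the two interchanges of integration and limit around the non-absolutely-convergent $T$-integral, and in particular making the cancellation of the boundary terms at $T \to \pm\infty$ precise. This cancellation is the heart of the matter: it is what distinguishes the closed slicing, where it holds so that $\mathfrak{B}$ is absent and only the diagonal $\delta$-contributions from the two ends remain, from the cosmological slicing of Theorem~\ref{thmdecomp}, where one end instead produces genuine boundary terms.
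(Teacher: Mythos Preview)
The paper does not supply its own proof of this lemma; it merely quotes it from \cite[Lemma~6.6]{infinite}. Your sketch is correct and is, in substance, the argument that underlies that reference and that the present paper reproduces in the cosmological setting in Theorem~\ref{thmdecomp}: regularize the $T$-integral, use the Dirac equation and integration by parts in $T$ to see that---because Lemma~\ref{lemma31} gives plane-wave asymptotics with exponentially small error at \emph{both} ends---no boundary term of type~$\mathfrak{B}$ survives and the $m\neq m'$ contribution vanishes, and then read off the $\pi\,\delta(m-m')$ contributions from the plane-wave tails at $T\to+\infty$ and $T\to-\infty$, which furnish the two summands $s=+$ and $s=-$ in~\eqref{modeform}. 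One small remark: Riemann--Lebesgue alone only gives $o(1)$ decay of the mass-averaged spinor, which suffices to kill the boundary terms but not to make the unregularized $T$-integral absolutely convergent; for the latter (and for the dominated-convergence step removing the regulator) one needs the $1/|T|$ decay obtained by integrating by parts in $m$, exactly as in Lemma~\ref{lemmainfty}.
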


Comparing~\eqref{modeform} with the representation of the scalar product~\eqref{print3}
(which can be evaluated asymptotically as~$T \rightarrow \pm \infty$), one can immediately read off
the fermionic signature operator:
\begin{Corollary}\label{fermionic_signature}
\[ \bra \p \psi | \p \tilde{\psi} \ket = \int_I (\psi_m \,|\, \Sig_m \,\phi_m)_m\: dm \:, \]
where the operator~$\Sig_m$ is of the form
\[ \Sig_m = \frac{1}{2} \:\big( \Sig_m^+ + \Sig_m^- \big) \]
with operators~$\Sig_m^\pm$ which modify the asymptotics of the solutions
in~\eqref{eq:3z} to
\begin{equation}\label{eq:error_term}
\big(\Sig_m^\pm u\big) (T) = \begin{pmatrix} e^{-imT} \:f^\pm_1 \\[0.3em]
-e^{imT} \:f^\pm_2 \end{pmatrix} + \tilde{E}^\pm(T) \qquad \text{as~$T \rightarrow \pm \infty$}\:,
\end{equation}  
where the error term again decays exponentially~\eqref{eq:3c}.
\end{Corollary}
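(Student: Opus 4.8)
The plan is to read off $\Sig_m$ directly from the asymptotic representation already established, matching the two expressions \eqref{modeform} and \eqref{print3} for a single spatial mode. Since the scalar product $(\psi_m \,|\, \tilde\psi_m)_m$ is independent of the Cauchy surface by current conservation, I would first evaluate \eqref{print3} in the limits $T \to \pm\infty$. Inserting the plane-wave asymptotics \eqref{eq:3z} of Lemma~\ref{lemma31}, the oscillatory phases cancel in each product $\overline{u_j}\, \tilde u_j$ and the error terms $E^\pm(T)$ decay, so that $(\psi_m \,|\, \tilde\psi_m)_m = 2\pi\big( \overline{f^\pm_1}\, \tilde f^\pm_1 + \overline{f^\pm_2}\, \tilde f^\pm_2 \big)$ for either choice of sign. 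In particular the $\H_m$-norm of a single mode equals $2\pi\big( |f^\pm_1|^2 + |f^\pm_2|^2\big)$.

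Next I would define $\Sig_m^\pm$ through its action on asymptotic data: it sends the solution with future/past data $(f^\pm_1, f^\pm_2)$ to the solution with data $(f^\pm_1, -f^\pm_2)$, which is precisely the prescription \eqref{eq:error_term}. This is well defined because, by the error bound in Lemma~\ref{lemma31}, the map assigning to a solution its asymptotic pair $(f^\pm_1, f^\pm_2)$ is a linear bijection of the two-dimensional mode space (a solution with vanishing data vanishes identically). The norm identity from the previous step shows that flipping the sign of $f^\pm_2$ is an isometry on each mode; hence, acting diagonally over the modes $\lambda$, $\Sig_m^\pm$ is a bounded operator on $\H_m$ with $\|\Sig_m^\pm\| = 1$, uniformly in $m$, as demanded by Theorem~\ref{thmsMOP}~(iii). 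A direct computation gives $(\psi_m \,|\, \Sig_m^\pm \tilde\psi_m)_m = 2\pi\big( \overline{f^\pm_1}\, \tilde f^\pm_1 - \overline{f^\pm_2}\, \tilde f^\pm_2 \big)$, so that averaging yields
\[
\tfrac12 \sum_{s=\pm} \big(\psi_m \,|\, \Sig_m^s \tilde\psi_m\big)_m = \pi \sum_{s=\pm} \big( \overline{f^s_1}\, \tilde f^s_1 - \overline{f^s_2}\, \tilde f^s_2 \big).
\]
Integrating over $m \in I$ and comparing with \eqref{modeform} gives the assertion with $\Sig_m = \tfrac12\big( \Sig_m^+ + \Sig_m^- \big)$.

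The computation is short because Lemmas~\ref{lemma31} and~\ref{lemmamode} carry out the analytic work; the one point requiring care is to verify that $\Sig_m^\pm$, specified only through asymptotic data, is a genuine bounded operator on all of $\H_m$ rather than a formal prescription. This hinges on the observation of the first step that the scalar product equals the sum of squared moduli of the asymptotic coefficients, which makes the sign flip norm-preserving and simultaneously delivers both the boundedness and the uniform bound over modes. As a consistency check one may note that $\Sig_m$ is symmetric, since the form $\overline{f^s_1}\, \tilde f^s_1 - \overline{f^s_2}\, \tilde f^s_2$ is Hermitian in $(\psi,\tilde\psi)$, and that it agrees with the operator singled out by the uniqueness statement of Proposition~\ref{prpunique}.
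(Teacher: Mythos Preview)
Your argument is correct and is precisely the comparison the paper has in mind: the corollary is stated without a separate proof, the preceding sentence simply says to compare~\eqref{modeform} with~\eqref{print3} evaluated asymptotically as~$T\to\pm\infty$, and you have carried this out in detail. Your additional checks that~$\Sig_m^\pm$ is a well-defined isometry on each mode (hence uniformly bounded on~$\H_m$) and that~$\Sig_m$ is symmetric are welcome elaborations the paper leaves implicit.
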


In the remainder of this paper, we shall compute the fermionic signature operator
and the resulting fermionic projector state in more detail.
To this end, we shall make use of the fact that the ODE~\eqref{print3}
has explicit solutions in terms of hypergeometric functions.
We compute these explicit solutions by first considering the corresponding
second order scalar ODEs.

\subsection{Derivation of Second Order Scalar Equations}
It is convenient to deduce from the Dirac equation~\eqref{DiracODE} two second order
scalar equations for~$u_1$ and~$u_2$. To this end, one multiplies
the first equation in~\eqref{DiracODE} by~$e^{-imT} \,R(T)$ and the second one
by~$e^{imT} \,R(T)$. Subsequently one differentiates both equations with respect to~$T$, obtaining
\begin{align*}
\ddot{u}_1 &= -m^2\, u_1 - \frac{i m \dot{R}}{R}\: u_1 - \frac{\dot{R}}{R}\: \dot{u}_1 
+ \frac{\lambda}{R}\: \big( i \dot{u}_2 + m u_2 \big)\:, \\
\ddot{u}_2 &= -m^2\, u_2  + \frac{i m \dot{R}}{R}\: u_2 - \frac{\dot{R}}{R}\: \dot{u}_2
+ \frac{\lambda}{R}\: \big( i \dot{u}_1 - m u_1 \big) \:,
\end{align*}
which can be decoupled by inserting the second and first equation in~\eqref{DiracODE},
respectively. We obtain
\begin{align}
\ddot{u}_1 &= -m^2\, u_1 - \frac{\lambda^2}{R^2}\: u_1 
- \frac{i m \dot{R}}{R}\: u_1 - \frac{\dot{R}}{R}\: \dot{u}_1 \label{u1eq} \\
\ddot{u}_2 &= -m^2\, u_2  - \frac{\lambda^2}{R^2}\: u_2
+ \frac{i m \dot{R}}{R}\: u_2 - \frac{\dot{R}}{R}\: \dot{u}_2 \:. \label{u2eq}
\end{align}

\subsection{Explicit Solution of the Dirac Equation}
Employing the ansatz
\[ u_1(T) = e^{-i m T} \: v_1 (z) \qquad \text{with} \qquad z(T) = \frac{e^{-T}}{e^{T} + e^{-T}} \:, \]
\eqref{u1eq} transforms to
\[ z(1-z)\: v_1''(z) + \Big(\frac{1}{2} + i m -z \Big)\, v_1'(z) + \lambda^2\: v_1(z) = 0 \:. \]
This is the hypergeometric differential equation, see e.g.~\cite[eqs~15.10.1]{dlmf},
having as special solution the hypergeometric series~${}_{2}F_{1}(-\lambda, \lambda; \frac{1}{2}+im; z)$.
Therefore we obtain a particular solution of~\eqref{u1eq}
\beq \label{up1}
u^+_1(T) = e^{-i m T} \: {}_{2}F_{1}\Big(-\lambda, \lambda; \frac{1}{2}+im; \frac{e^{-T}}{e^{T} + e^{-T}} \Big)  \:.
\eeq
Substituting it into the first Dirac equation in~\eqref{DiracODE}, one can solve for~$u_2$,
\begin{align}
u^+_2(T) &= \frac{\cosh T}{\lambda} \,\Big( - i \dot{u}^+_1(T) + m \,u^+_1(T) \Big) \notag \\
&= -\frac{2}{2m-i}\: \frac{e^{-i m T}}{e^{T} + e^{-T}} \; {}_{2}F_{1}\Big(1-\lambda, 1+\lambda; \frac{3}{2}
+ im; \frac{e^{-T}}{e^{T} + e^{-T}} \Big) \:, \label{up2}
\end{align}
where in the last step we used the formula for the derivatives of hypergeometric
functions~\cite[eq.~15.5.1]{dlmf}
\beq \label{hyperdiff}
\frac{d}{dz}\, {}_{2}F_{1} \big(a,b;c;z \big) =\frac{ab}{c}\;
{}_{2}F_{1} \big(a+1,b+1;c+1;z \big)\:.
\eeq
Similarly, for~\eqref{u2eq} we obtain the fundamental solution
\beq \label{um2}
u^-_2 = e^{i m T} \: {}_{2}F_{1}\Big(-\lambda, \lambda; \frac{1}{2}-im; \frac{e^{-T}}{e^{T} + e^{-T}} \Big) \:.
\eeq
Substituting it into the second Dirac equation in~\eqref{DiracODE} and solving for~$u_1$,
we obtain
\begin{align}
u^-_1(T) &= \frac{\cosh T}{\lambda} \,\Big( - i \dot{u}^-_2(T) - m \,u^-_2(T) \Big) \notag \\
&= -\frac{2}{2m+i}\: \frac{e^{i m T}}{e^{T} + e^{-T}} \; {}_{2}F_{1}\Big(1-\lambda, 1+\lambda; \frac{3}{2}
- im; \frac{e^{-T}}{e^{T} + e^{-T}} \Big) \:, \label{um1}
\end{align}
where in the last step we applied again \eqref{hyperdiff}.

With~\eqref{up1}, \eqref{up2} and~\eqref{um2}, \eqref{um1} we have constructed
two solutions of the Dirac equation~\eqref{DiracODE}. These form a fundamental system. This can be seen most easily by noting that they have a different asymptotics as~$T \rightarrow \infty$, as it will be worked out in detail in the next section.

\subsection{Asymptotics of the Solutions}\label{Sec:asymp_sol}
We evaluate the asymptotics of the Dirac solutions as~$T \rightarrow \pm \infty$.
We begin with that as~$T \rightarrow +\infty$.
In this limiting case, the last argument of the hypergeometric functions
in~\eqref{up1}, \eqref{up2}, \eqref{um2}, \eqref{um1} tends to zero, making it
possible to use the power expansion (see~\cite[eqs~15.2.1]{dlmf})
\beq \label{zzero}
{}_{2}F_{1}\big(a,b;c;z \big) = 1 + \O(z) \:.
\eeq
We thus obtain the simple asymptotics
\[ 
\begin{split} 
\begin{pmatrix} u_1^+ \\[0.3em] u_2^+ \end{pmatrix}
&= \begin{pmatrix} e^{-i mT} \\ 0 \end{pmatrix} + \O \big( e^{-T} \big) \\
\begin{pmatrix} u_1^- \\[0.3em] u_2^- \end{pmatrix}
&= \begin{pmatrix} 0 \\ e^{i mT} \end{pmatrix} + \O \big( e^{-T} \big) \:, 
\end{split} 
\]
which also shows that our two solutions of the Dirac equation are linearly independent and thus
form a fundamental system of~\eqref{DiracODE}.

The asymptotics as~$T \rightarrow -\infty$ is a bit more difficult because, in this
limit, the last argument of the hypergeometric functions
in~\eqref{up1}, \eqref{up2}, \eqref{um2}, \eqref{um1} tends to one.
Even if ${}_{2}F_1$ is the standard notation for hypergeometric functions, it is more convenient to work with the function
$$\mathbf{F}\left({a,b\atop c};z\right)  := \frac{{{}_{2}F_{1}}\left(a,b;c;z\right)}{\Gamma\left(c\right)}\,,$$
where $\Gamma$ is Euler's Gamma function.
Applying the relation between hypergeometric functions~\cite[eq~ 15.8.4]{dlmf}
\begin{align*}
\frac{\sin\left(\pi(c-a-b)\right)}{\pi}\;\mathbf{F}\left({a,b\atop c};z\right)
&= \frac{1}{\Gamma\left(c-a\right)\Gamma\left(c-b\right)}\; \mathbf{F}\left({a,b%
\atop a+b-c+1};1-z\right) \\[0.3em]
&\quad\: -\frac{(1-z)^{c-a-b}}{\Gamma\left(a\right)\Gamma\left%
(b\right)}\; \mathbf{F}\left({c-a,c-b\atop c-a-b+1};1-z\right) \:,
\end{align*}
the last argument of these hypergeometric functions can be transformed in such a way that
we can again work with the simple asymptotics~\eqref{zzero}.
A straightforward computation yields
\[ 
\begin{split} 
\begin{pmatrix} u_1^+ \\[0.3em] u_2^+ \end{pmatrix}
&= \frac{1}{\cosh(m \pi)}\;
\begin{pmatrix} \displaystyle e^{-i mT} \: \frac{\pi \:\Gamma(\frac{1}{2}+im)}{\Gamma(\frac{1}{2}-im)\: \Gamma(\frac{1}{2}+im-\lambda)\: \Gamma(\frac{1}{2}+im+\lambda)} \\[1.5em] -i e^{i mT}\: \sin(\pi \lambda)
\end{pmatrix} + \O \big( e^{T} \big) \\[.5em]
\begin{pmatrix} u_1^- \\[0.3em] u_2^- \end{pmatrix}
&= \frac{1}{\cosh(m \pi)}\;
\begin{pmatrix} -i e^{-i mT}\: \sin(\pi \lambda) \\[0.5em]
\displaystyle e^{i mT} \: \frac{\pi \:\Gamma(\frac{1}{2}-im)}{\Gamma(\frac{1}{2}+im)\: \Gamma(\frac{1}{2}-im-\lambda)\: \Gamma(\frac{1}{2}-im+\lambda)}
\end{pmatrix} + \O \big( e^{T} \big) \:.
\end{split} 
\]

\subsection{Computation of the Fermionic Signature Operator}

Using the information gathered up to this point, we have all necessary ingredients to compute explicitly the fermionic signature operator in full de Sitter spacetime. More precisely we shall apply Lemma \ref{lemmamode} and Corollary \ref{fermionic_signature}, which guarantee us that, since the error term in \eqref{eq:error_term} decays exponentially in time, the operators $\Sig_m^\pm$ can be read off from the asymptotic expansion for large values of $T$ of the solutions of the Dirac equation, constructed in Section \ref{Sec:asymp_sol}. We obtain
\[ 
\Sig_m^+=\left(\begin{array}{cc}
1 & 0 \\
0 & -1 
\end{array}\right)
\]
and
\begin{align*}
&\Sig_m^- \\
&\!=\left( \!\!\begin{array}{cc}
\frac{\cos(2\pi\lambda)+\sinh^2(\pi m)}{\cosh^2(\pi m)} & \frac{2\pi i\Gamma(\frac{1}{2}+im)\sin(\pi\lambda)}{\cosh^2(\pi m)\Gamma(\frac{1}{2}-im)\Gamma(\frac{1}{2}+im+\lambda)\Gamma(\frac{1}{2}+im-\lambda)} \\
-\frac{2\pi i\Gamma(\frac{1}{2}-im)\sin(\pi\lambda)}{\cosh^2(\pi m)\Gamma(\frac{1}{2}+im)\Gamma(\frac{1}{2}-im+\lambda)\Gamma(\frac{1}{2}-im-\lambda)} & -\frac{\cos(2\pi\lambda)+\sinh^2(\pi m)}{\cosh^2(\pi m)} 
\end{array} \!\!\right).
\end{align*}
Using the formulae in Corollary \ref{fermionic_signature}, we obtain for the fermionic signature operator
\beq \label{Sig} \begin{split}
&\Sig_m \\
&\!= \left( \!\!\begin{array}{cc}
\frac{\cos(2\pi\lambda)+\cosh(2 \pi m)}{2\cosh^2(\pi m)} & \frac{2\pi i\Gamma(\frac{1}{2}+im)\sin(\pi\lambda)}{\cosh^2(\pi m)\Gamma(\frac{1}{2}-im)\Gamma(\frac{1}{2}+im+\lambda)\Gamma(\frac{1}{2}+im-\lambda)} \\
-\frac{2\pi i\Gamma(\frac{1}{2}-im)\sin(\pi\lambda)}{\cosh^2(\pi m)\Gamma(\frac{1}{2}+im)\Gamma(\frac{1}{2}-im+\lambda)\Gamma(\frac{1}{2}-im-\lambda)} & -\frac{\cos(2\pi\lambda)+\cosh(2 \pi m)}{2\cosh^2(\pi m)} 
\end{array} \!\!\right).
\end{split}
\eeq

Following Theorem \ref{Thm:2-pt}, starting from the fermionic signature operator, we can construct a unique bi-distribution $\mathcal{P}\in\mathcal{D}^\prime(\scrM\times\scrM)$, such that, for all $\phi,\psi\in C^\infty_0(\scrM,S\scrM)$ 
\begin{equation}\label{2-pt-BD}
\langle\phi,P\psi\rangle=\mathcal{P}(\phi\otimes\psi),
\end{equation}
where $P:=-\chi_{(-\infty, 0)}(S_m)k_m:C^\infty_0(\scrM,S\scrM)\to\mathcal{H}_m$ is defined as in \eqref{Pdef} starting from \eqref{Sig}. In turn, on account of Theorem \ref{thmstate}, this bi-distribution identifies a pure quasi-free state for the algebra of smeared field such that
$$\omega(\Psi(\phi)\Psi^*(\psi))=-\mathcal{P}(\phi\otimes\psi).$$

In order to analyze the physical significance of the obtained state, we observe that it enjoys the following two properties:

\begin{enumerate}[leftmargin=2.5em]
	\item On account of Lemma \ref{Lem:isometry_invariance}, $\mathcal{P}$ is a bi-distribution invariant under the action of the isometry group of full de Sitter. In other words, the state built out of $P$ is maximally symmetric. 
	\item In view of Equation \eqref{antismeared}, the integral kernel associated with the canonical anti-commutation relations coincides with that of the causal fundamental solution of the massive Dirac equation, that it $\tilde{k}_m$.
\end{enumerate}

\noindent These two properties entail the following 

\begin{Prp}
	Let $\omega$ be the quasi-free pure state which is unambiguously defined by \eqref{2-pt-BD} according to Theorem \ref{thmstate}. Then it coincides with the spinorial Bunch-Davies state. 
\end{Prp}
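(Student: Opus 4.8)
The plan is to identify $\omega$ with the spinorial Bunch-Davies state by exploiting the uniqueness built into the construction of Allen and L\"utken \cite{Allen}: among all maximally symmetric two-point functions on de Sitter spacetime there is exactly one compatible with the Hadamard short-distance behaviour \eqref{eq:local_Hadamard}, namely the one given explicitly by \eqref{eq:aux1}, \eqref{eq:sol1a} and \eqref{eq:sol1b}. Property (1) above, which rests on Lemma \ref{Lem:isometry_invariance}, already guarantees that the two-point distribution $\mathcal{P}$ is invariant under the full isometry group of de Sitter and is therefore maximally symmetric, while property (2) fixes the canonical anti-commutation relations \eqref{antismeared} so that $\omega$ is a bona fide quasi-free state for the correct field algebra (Theorem \ref{thmstate}). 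Consequently the entire argument reduces to the single assertion that $\omega$ is of Hadamard form; once this is established, the Allen--L\"utken uniqueness forces $\mathcal{P}$ to coincide with the Bunch-Davies two-point function and the claim follows.

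To verify the Hadamard property, I would first use maximal symmetry to collapse the problem to a short-distance analysis: by homogeneity and isotropy the kernel of $\mathcal{P}$ depends only on the geodesic distance, so it suffices to control its singularity as the two arguments coalesce and to check that, after subtracting the universal parametrix built from $U/\sigma_\epsilon$ and $V\ln\sigma_\epsilon$ in \eqref{eq:local_Hadamard}, the remainder $W$ is smooth. Concretely, I would insert the explicit fermionic signature operator \eqref{Sig} together with the hypergeometric mode solutions of Section \ref{Sec:asymp_sol} into the mode expansion of $P=-\chi_{(-\infty,0)}(\Sig_m)\,k_m$ and analyze the resulting sum over the spatial eigenvalues $\lambda$. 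The exponential decay of the error terms in \eqref{eq:3z} (Lemma \ref{lemma31}), uniform in $\lambda$, should show that the large $|\lambda|$ tail of the mode sum reproduces exactly the positive-frequency distribution whose wavefront set is the expected light-cone bundle prescribed by the microlocal Hadamard condition of Radzikowski \cite{Radzikowski:1996pa,Radzikowski:1996ei}, the finitely many remaining modes contributing only smooth corrections.

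A more conceptual route is to appeal to the general results on the Hadamard property of fermionic projector states in spacetimes of infinite lifetime \cite{Finster:2016apv,hadamard,hadamard2}. The hypotheses of that framework are met here: the strong mass oscillation property holds by Theorem \ref{thmdeSitter}, and Lemma \ref{lemma31} furnishes the plane-wave asymptotics of the Dirac solutions both as $T\to+\infty$ and as $T\to-\infty$ with exponentially small errors, which is precisely the asymptotic control needed to compare $\chi_{(-\infty,0)}(\Sig_m)$ with a microlocal positive-frequency splitting. The symmetric averaging $\Sig_m=\tfrac12(\Sig_m^++\Sig_m^-)$ of the future and past signature operators from Corollary \ref{fermionic_signature} treats the two conformal boundaries on an equal footing, and I expect this to be the structural feature that distinguishes the Hadamard (Bunch--Davies) state from the non-Hadamard de Sitter invariant states.

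I expect the Hadamard verification to be the main obstacle. Maximal symmetry by itself does not single out the Bunch-Davies state, since there exist de Sitter invariant states that fail to be Hadamard, so the whole weight of the argument lies in controlling the wavefront set of $\mathcal{P}$, equivalently the smoothness of the state-dependent term $W$. The delicate point is to show that the spectral projection $\chi_{(-\infty,0)}(\Sig_m)$ built from \eqref{Sig} does not spoil the correct light-cone singularity, i.e.\ that the precise combination of Gamma-function factors in \eqref{Sig} is such that the high-mode asymptotics matches the positive-frequency kernel and does not contain an admixture of the wrong-frequency part. Establishing this uniform-in-$\lambda$ matching, and hence the smoothness of the remainder, is the crux on which the coincidence with the spinorial Bunch-Davies state ultimately rests.
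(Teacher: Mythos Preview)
Your strategy differs from the paper's in one decisive respect: you place the entire burden on establishing the Hadamard property of~$\omega$ and then invoke Allen--L\"utken uniqueness among maximally symmetric \emph{Hadamard} states. The paper, by contrast, never verifies the Hadamard property directly. Its argument is that, by~\cite{Allen}, the space of maximally symmetric two-point distributions for the Dirac equation on de Sitter is spanned by exactly two solutions of a hypergeometric ODE in the geodesic distance: the first, built from~\eqref{eq:aux1}--\eqref{eq:sol1b} with argument~$Z(\mu)$, has the correct light-cone singularity and in particular satisfies the canonical anti-commutation relations~\eqref{antismeared}; the second, obtained by replacing~$Z(\mu)$ with~$1-Z(\mu)$, is singular at the antipodal point and is \emph{incompatible} with~\eqref{antismeared}. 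Since the fermionic projector state is maximally symmetric (property~(1), Lemma~\ref{Lem:isometry_invariance}) and does satisfy the CAR (property~(2), Theorem~\ref{thmstate}), it can contain no admixture of the second solution and must therefore coincide, up to normalization, with the Bunch--Davies two-point function. The Hadamard property then falls out as a consequence, not as an input.

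The practical difference is substantial. You correctly identify the Hadamard verification as ``the main obstacle'' and leave it as a programme---either a uniform-in-$\lambda$ large-mode analysis of the hypergeometric kernel or an appeal to the general results of~\cite{hadamard,hadamard2,Finster:2016apv}, neither of which you carry out. The paper sidesteps this entirely by observing that property~(2), which you invoke only to certify that~$\omega$ is a state on the correct algebra, is in fact the selection criterion that singles out the Bunch--Davies branch within the two-dimensional Allen--L\"utken family. Your route is not wrong in principle, but it is considerably harder and remains incomplete as written; the paper's route is a two-line application of properties~(1) and~(2) to the dichotomy already established in~\cite{Allen}.
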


\begin{proof}
	Since $\omega$ determines a quasi-free state, it suffices to consider the two-point distribution. In particular, the relation \eqref{antismeared} implies that, for all $\phi,\psi\in C^\infty_0(\scrM,S\scrM)$, $\omega(\Psi(\phi)\Psi^*(\psi)+\Psi^*(\psi)\Psi(\phi))=\langle\phi^*|\tilde{k}_m\psi\rangle$, where $\tilde{k}_m$ is the causal fundamental solution.
	
	In \cite{Allen} it is shown that the space of the maximally symmetric two-point distributions associated to the Dirac equation on full de Sitter spacetime can be reconstructed out of two linearly independent bi-distributions, solutions of an ordinary differential equation in the geodesic distance as the underlying variable. The first one is built out of \eqref{eq:aux1} and of \eqref{eq:sol1a} and \eqref{eq:sol1b}. The singular structure of the hypergeometric function in a neighborhood of $Z(\mu)=0$ entails that the ensuing two-point function is compatible both with the Hadamard form and with Equation \eqref{antismeared}. The second linearly dependent solution on the other hand is of the same form as \eqref{eq:aux1} and of \eqref{eq:sol1a} and \eqref{eq:sol1b}, but with $Z(\mu)$ (defined in \eqref{eq:Zmu}) replaced by $1-Z(\mu)$. In particular, this implies that any two-point function of a state for the algebra of smeared fields cannot be compatible with \eqref{antismeared}. Hence, possibly up to an irrelevant normalization constant, the state built out of \eqref{2-pt-BD} must coincide with the spinorial Bunch-Davies state.
\end{proof}

\Thanks {{\em{Acknowledgments:}} C.D. is grateful to the Department of Mathematics of the University of Regensburg and the Mathematical Institute of the University of Freiburg for the kind hospitality during the realization of part of this work. S.M.\ acknowledges the  support by the research grant { ``Geometric boundary value problems for the Dirac operator''} of the Juniorprofessurenprogramm Baden-W\"urttemberg as well as that by the DFG research training groups GRK 1692 ``Curvature, Cycles, and Cohomology'' and  GRK 1821 ``Cohomological Methods in Geometry'' during the initial stages of this project. 
	E.R. wishes to thank the Department of Mathematics of the University of Regensburg for the kind hospitality during the realization of part of this project and she acknowledges the support of the INdAM project GNAMPA2017  ``Analisi Di Modelli Matematici
Della Fisica, Della Biologia E Delle Scienze Sociali" for the initial stages of this work.


\end{document}